\newtheorem{theorem}{Theorem}[section]
\newtheorem{corollary}[theorem]{Corollary}
\newtheorem{lemma}[theorem]{Lemma}
\theoremstyle{definition}
\newtheorem{definition}{Definition}[section]
\theoremstyle{definition}
\newtheorem{remark}{Remark}[section]
\DeclareMathOperator*{\argmin}{arg\,min}
\DeclareMathOperator{\poly}{poly}
\newcommand{\R}{\mathbb{R}}
\newcommand{\N}{\mathbb{N}}
\newcommand{\F}{\mathbb{F}}
\renewcommand{\norm}[1]{\left\lVert#1\right\rVert}
\renewcommand{\P}{$\mathsf{P}$}
\newcommand{\NP}{$\mathsf{NP}$}
\newcommand{\kdiam}{$\mathsf{Max}$-$k$-$\mathsf{Diameter}$\xspace}
\newcommand{\tdiam}{$\mathsf{Max}$-$3$-$\mathsf{Diameter}$\xspace}
\newcommand{\kcen}{$k$-$\mathsf{center}$\xspace}
\newcommand{\kmean}{$k$-$\mathsf{means}$\xspace}
\newcommand{\kmed}{$k$-$\mathsf{median}$\xspace}
\newcommand{\minsum}{$k$-$\mathsf{minsum}$\xspace}
\newcommand{\had}{\tilde{\text{Had}}}
\newcommand{\PTAS}{$\mathsf{PTAS}$\xspace}
\newcommand{\np}{$\mathsf{NP}$\xspace}
\newcommand{\p}{$\mathsf{P}$\xspace}
\newcommand{\ptas}{$\mathsf{PTAS}$\xspace}
\newcommand{\comp}{\overline}
\newcommand{\mc}{\mathcal}
\newcommand{\ip}[2]{\langle #1, #2 \rangle}
\renewcommand{\tilde}{\widetilde}
\title{\bf Inapproximability of Maximum Diameter Clustering for Few Clusters}
 \author{Henry Fleischmann \\ University of Cambridge
\and Kyrylo Karlov \\ Charles University
\and Karthik C.\ S. \\Rutgers University
 \and 
 Ashwin Padaki \\ Columbia University
 \and Stepan Zharkov \\ Stanford University}
\date{}
\begin{document}
\maketitle

\begin{abstract}
In the \kdiam problem, we are given a set of points in a metric space, and the goal is to partition the input points into $k$ parts such that 
the maximum  pairwise distance between points in the same part of the partition is minimized. \vspace{0.1cm}
 
 The approximability of the \kdiam problem was studied in the eighties, culminating in the work of Feder and Greene [STOC'88], wherein they showed it is \NP-hard to approximate within a factor better than 2 in the $\ell_1$ and $\ell_\infty$ metrics, and \NP-hard to approximate within a factor better than 1.969 in the Euclidean metric. This complements the celebrated 2 factor polynomial time approximation algorithm for the problem in general metrics (Gonzalez [TCS'85]; Hochbaum and Shmoys [JACM'86]). \vspace{0.1cm}

 Over the last couple of decades, there has been increased interest from the algorithmic community to study the approximability of various clustering objectives when the number of clusters is fixed. In this setting, the framework of coresets has yielded \PTAS for most popular clustering objectives, including \kmean, \kmed, \kcen, \minsum, and so on.\vspace{0.1cm}

In this paper, rather surprisingly, we prove that even when $k=3$, the \kdiam problem is \NP-hard to approximate within a factor of $1.5$ in the $\ell_1$-metric (and Hamming metric) and \NP-hard to approximate within a factor of $1.304$ in the Euclidean metric.\vspace{0.1cm}

Our main conceptual contribution is the introduction of a novel framework called \emph{cloud systems} which embed hypergraphs into $\ell_p$-metric spaces such that the chromatic number of the hypergraph is related to the quality of the \kdiam clustering of the embedded pointset. Our main technical contributions are the constructions of nontrivial cloud systems in the Euclidean and $\ell_1$-metrics using extremal geometric structures.    
\end{abstract}

\thispagestyle{empty} 
\clearpage
 \pagenumbering{arabic}

%




\section{Introduction} \label{sec: introduction}

The \kdiam problem is the task of optimizing a  classic clustering objective, where given a set of points in a metric space, we are required to  partition the points into $k$ parts so as to minimize
the maximum pairwise distance between points in the same part of the partition (see Section~\ref{sec: preliminaries} for a formal definition). 
This clustering objective was actively studied in the eighties under the lens of approximation. One of the main advantages of \kdiam over other clustering objectives such as \kcen, \kmean, and \kmed is that \kdiam is not center-based, and thus one does not have to worry about the quality of centers in the applications of this clustering objective.

By the early 1980s, the \kdiam problem was known to be \NP-complete even in the Euclidean plane (implicit in \cite{fowler1981optimal}). Thus, the attention of the community turned towards understanding its approximability in various metric spaces.  There were a series of works all providing 2-approximation to the \kdiam problem in general metrics with improved polynomial runtimes \cite{hochbaum1985best, Hochbaum_Shmoys_1986,Gonzalez_1985,Feder_Greene_1988}. Hochbaum and Shmoys \cite{Hochbaum_Shmoys_1986} showed that this factor cannot be improved (assuming \NP$\neq$\P) for general metrics. Thus, the focus shifted to $\ell_p$-metrics. 

Gonzalez \cite{Gonzalez_1985} showed that the \kdiam problem in the Euclidean metric  is \NP-hard to approximate to a factor of 1.732. This was improved by Feder and Greene \cite{Feder_Greene_1988} who showed that the \kdiam problem in the Euclidean metric  is \NP-hard to approximate to a factor of 1.969. Moreover, they showed that the \kdiam problem is \NP-hard to approximate within a factor better than 2 in both the $\ell_1$ and $\ell_{\infty}$ metrics. More importantly, all the results of Feder and Greene hold when the input is in two dimensions. 

At first glance, it looks like we have hit a road block in understanding the approximability of the \kdiam problem. On the one hand, we have completely understood its approximability in the general metrics, $\ell_1$, and $\ell_\infty$ metrics (even in the plane!). On the other hand, the small gap in our knowledge about the Euclidean \kdiam problem is  much like other Euclidean clustering problems (such as Euclidean \kcen \cite{Feder_Greene_1988,mentzer1988approximability}), which have been open for decades with no real tools of attack to bridge the gap in approximability.

Over the last two-and-a-half decades, a lot of effort has been invested by the algorithmic community to understand the approximability of popular clustering objectives when the number of clusters is \emph{fixed} (in $\ell_p$-metrics). In fact, a powerful paradigm has emerged to provide polynomial time approximation schemes (\PTAS)  for all these problems: \emph{Coresets} \cite{feldman2020introduction,cohen2021new}. In the literature, when $k$ is fixed, a \PTAS is already known for  \kcen~\cite{badoiu2002approximate}, \kmean~\cite{kumar2010linear,chen2009coresets}, \kmed~\cite{kumar2010linear, chen2009coresets}, \minsum~\cite{de2003approximation}, and other popular clustering objectives. 

\begin{center}\emph{
   Does \kdiam in $\ell_p$-metrics admit a \PTAS when $k$ is fixed?}
\end{center}

An argument by Megiddo \cite{Megiddo_1990} proves that \tdiam is \NP-hard to approximate to a factor better than 2 in the $\ell_\infty$-metric by a reduction from the 3-coloring problem. Thus, we explore the question in the Euclidean metric and the $\ell_1$-metric. 

\begin{center}\emph{
    What is the complexity/approximability of \kdiam\\ in the $\ell_1$-metric and the $\ell_2$-metric when $k$ is fixed?}
\end{center}

\subsection{Our Results}

The main contributions of this paper are strong hardness of approximation results for \kdiam in the Euclidean metric and the $\ell_1$-metric, even when $k=3$.  
Our results are surprising, as it is unclear \textit{a priori} why \tdiam does not admit a \ptas much like the aforementioned other clustering objectives. Note that it has been observed in the literature (for example \cite{Megiddo_1990}) that \kdiam can be reduced to the $k$-coloring problem; since $2$-coloring is in \P, we have that \kdiam is in \P\ when $k=2$. 

First, we present our result for the $\ell_1$-metric.

\begin{theorem}[formalized in Theorem \ref{thm: L1 3/2}]
    For every $\varepsilon>0$ and $k\geq 3$, approximating \kdiam in the $\ell_1$-metric (and the Hamming metric) to a factor $1.5-\varepsilon$ is \NP-hard.
\end{theorem}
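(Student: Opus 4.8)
The plan is to build a gap-producing polynomial-time reduction to \tdiam in the Hamming metric from a restricted but still \NP-hard version of graph $3$-colourability, and then to lift it to every $k \ge 3$ by a padding argument. Fix a scale parameter $N$ divisible by $4$. From a graph $G = (V, E)$ in the source family I will produce a point set $P_G \subseteq \{0,1\}^d$ with three properties: \emph{completeness} --- if $G$ is $3$-colourable then $P_G$ has a partition into three clusters each of diameter at most $N$; \emph{soundness} --- if $G$ is not $3$-colourable then every partition of $P_G$ into three clusters has a cluster of diameter at least $\tfrac32 N$; and \emph{rigidity} --- every pairwise Hamming distance occurring in $P_G$ lies in $\{0, 1, \dots, N\} \cup \{\tfrac32 N\}$. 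Rigidity forces the optimum of any $3$-clustering into $\{0,\dots,N\}\cup\{\tfrac32 N\}$, so completeness and soundness together show that a polynomial-time $(\tfrac32 - \varepsilon)$-approximation for \tdiam would decide $3$-colourability on the source family; as $\varepsilon > 0$ is arbitrary this yields \NP-hardness of $(\tfrac32-\varepsilon)$-approximation, and the $\ell_1$ statement is the very same instance viewed inside $\ell_1$, of which Hamming distance on $\{0,1\}^d$ is the restriction.

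The construction is organised around a \emph{colour gadget}: three anchor points $R_1, R_2, R_3$ that are pairwise at Hamming distance exactly $\tfrac32 N$, for instance the characteristic vectors $1^{a}0^{a}0^{a}$, $0^{a}1^{a}0^{a}$, $0^{a}0^{a}1^{a}$ with $a = \tfrac34 N$. In any clustering of diameter below $\tfrac32 N$ these three points must occupy three distinct clusters, which we label as the three colours. Each vertex $v$ is represented by a point $p_v$ lying within distance $N$ of all three anchors, and the cluster of $p_v$ is read off as the colour of $v$. The metric fact driving the argument --- and pinning the constant to $\tfrac32$ --- is: \emph{if $x$ and $y$ each lie within Hamming distance $N$ of three points pairwise at distance $\tfrac32 N$, then $d(x,y) \le \tfrac32 N$}; this is tight, and $\tfrac32 N$ is the largest anchor spacing for which a common point within distance $N$ of all three anchors can exist. (The proof is a coordinate count: the anchors pairwise disagree on exactly $\tfrac94 N$ coordinates, and being within distance $N$ of all three forces $x$ to take the local majority value on all but at most $\tfrac34 N$ of them and to deviate little elsewhere; adding the budgets for $x$ and $y$ gives $\tfrac32 N$.) Granting the embedding below, completeness and soundness are then both immediate: if $c$ is a proper $3$-colouring, the clusters $\{R_i\} \cup \{p_v : c(v) = i\}$ have diameter at most $N$ (anchor--vertex pairs are within $N$ by placement, vertex--vertex pairs within a class are non-edges, hence within $N$ by construction); conversely, in any clustering of diameter below $\tfrac32 N$ the anchors pin down three colour classes, each $p_v$ falls into one, and any edge pair $p_u,p_v$ is at distance $\tfrac32 N$ and hence separated, so the colour map is proper.

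Everything therefore reduces to a single embedding problem, which I expect to be the main obstacle. Each $p_v$ should be a weight-$\tfrac34 N$ vector supported inside the $\tfrac94 N$ coordinates on which the anchors pairwise disagree (partitioned into three sub-blocks; placing exactly $\tfrac14 N$ of the support in each sub-block automatically puts $p_v$ within distance $N$ of every anchor), chosen so that $\supp(p_u)$ and $\supp(p_v)$ are disjoint for every edge $uv$ --- giving $d(p_u,p_v) = \tfrac32 N$ --- and $|\supp(p_u)\cap\supp(p_v)| \ge \tfrac14 N$ for every non-edge --- giving $d(p_u,p_v) \le N$ --- while \emph{no} pair has intersection size between $1$ and $\tfrac14 N - 1$, since that would land a distance in the forbidden window $(N,\tfrac32 N)$ and break rigidity. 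Equivalently, one must cover $V$ by independent sets of $G$ with each vertex covered exactly $\tfrac34 N$ times among $\tfrac94 N$ chosen sets (a uniform fractional $3$-colouring) and each non-adjacent pair jointly covered at least $\tfrac14 N$ times; the source family is a class of graphs admitting such a system on which ordinary $3$-colourability is still \NP-hard, obtained by composing standard $3$-colouring gadgets with enough padding to meet the fractional and intersection requirements (alternatively one can run the reduction directly from $3$-SAT using an anchor gadget, variable gadgets, and clause gadgets, each laid out explicitly in Hamming coordinates). The difficulty is intrinsic to $\ell_1$: the metric is additive, so one cannot manufacture an isolated far pair, and every distance constraint drags along collateral constraints that must all be reconciled while keeping the entire window $(N,\tfrac32 N)$ empty and every $p_v$ inside all three anchor balls.

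Finally, the lift to $k \ge 3$: adjoin $k - 3$ fresh coordinate blocks together with $k - 3$ dummy points, the $i$-th being all-ones on the $i$-th fresh block and zero everywhere else, each block long enough that every dummy is at Hamming distance at least $\tfrac32 N$ from every other point of the instance. Then any $k$-clustering of diameter below $\tfrac32 N$ must isolate each dummy, so $P_G$ is spread over at most three clusters and the $k = 3$ analysis applies verbatim; this gives \NP-hardness of $(\tfrac32 - \varepsilon)$-approximating \kdiam in the Hamming and $\ell_1$ metrics for every $k \ge 3$ and every $\varepsilon > 0$.
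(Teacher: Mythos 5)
Your high-level architecture is sound — reduce from a coloring problem, embed into the Boolean cube so that edges land at distance $\tfrac32 N$ and non-edges at distance $\le N$, read a coloring off a low-diameter clustering — and this is indeed the same genus of argument the paper uses (it is an instance of the paper's Lemma~\ref{lem: general embedding lemma}). Your padding argument for lifting $k=3$ to $k\ge 3$ is also fine, and the ``rigidity'' condition you insist on is a harmless extra: the standard completeness/soundness argument already works once $\mathrm{OPT}\le N$ in the yes case and $\mathrm{OPT}\ge\tfrac32 N$ in the no case, regardless of what intermediate distances occur.

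The gap is the step you yourself flag as ``the main obstacle,'' and it is the entire theorem: you need a polynomial-time computable family of $3$-coloring instances $G$ on which $3$-coloring is \np-hard \emph{and} an assignment $v\mapsto \supp(p_v)\subseteq[\tfrac94 N]$ of weight-$\tfrac34 N$ supports with $|\supp(p_u)\cap\supp(p_v)|=0$ on edges and $\ge\tfrac14 N$ on non-edges (respecting the three blocks). This is a strong global set-system condition — a design on the independent sets of $G$ — and it is not satisfied by arbitrary $3$-coloring instances; your appeal to ``composing standard $3$-colouring gadgets with enough padding'' or ``a direct reduction from $3$-SAT laid out in Hamming coordinates'' is an unverified assertion, not a construction. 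Everything that is actually hard in this theorem is concentrated exactly here.

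By contrast, the paper's proof does this work explicitly and in a structurally different way. It reduces not from general $3$-coloring but from $3$-edge-coloring of cut-edge-free cubic graphs (equivalently, rainbow $3$-coloring of $2$-regular $3$-uniform hypergraphs), so that the instance $\widehat G$ decomposes into bounded-size local gadgets $H_{ijk}$ (a Chv\'atal graph minus an edge, which is uniquely $3$-colorable) glued along the original vertices. Each gadget is embedded by assigning a pair of Hadamard codewords to each of its vertices (Figure~\ref{fig: 2-block chvatal embedding}); within a gadget the distances are $\tfrac32 q$ on edges and $\le q$ on non-edges by inspection, and the nontrivial part — ensuring no \emph{cross-gadget} non-edge accidentally lands above $q$ — is handled by an ``orientation'' map $\sigma$ chosen via a DFS of the underlying cubic graph (Lemmas~\ref{lem: stitching lemma} and~\ref{lem: orientation lemma}), which guarantees that embeddings of distinct gadgets sharing a hypergraph vertex never reuse the same codeword in the same block. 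Also note that the paper has no analogue of your three-anchor triple at pairwise distance $\tfrac32 N$: the ``must-be-rainbow'' constraint on a hyperedge $\{i,j,k\}$ is enforced by the Chv\'atal gadget, and the embedded points $\phi(i),\phi(j),\phi(k)$ are actually at pairwise distance $q$, not $\tfrac32 q$. So to turn your sketch into a proof you would need to either solve the fractional-design problem you posed (for some concretely hard family of graphs) or switch to a local-gadget construction of the paper's type; as written, the central construction is missing.
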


To the best of our knowledge, there is no known polynomial time algorithm that achieves a factor less than 2 for even \tdiam in the $\ell_1$-metric. Actually, we prove the above \NP-hardness for the \tdiam problem, and the hardness of approximation continues to hold for the \kdiam problem  when $k>3$, simply because we can reduce a hard instance of the \tdiam problem  to a hard instance of the \kdiam problem by adding $k-3$ many outliers to the input instance. 

Additionally, since the above result also holds in the Hamming metric, it implies that the \tdiam problem in the Euclidean metric is \NP-hard to approximate to a factor better than $\sqrt{1.5}\approx 1.224$, using the exact same construction. Interestingly, there is a better than 2-approximation algorithm for the \tdiam in the $\ell_2$-metric. As a straightforward consequence of the \ptas in \cite{badoiu2002approximate} for the $k$-center problem, there exists an efficient $(\sqrt{2}+\varepsilon)$-approximation algorithm for the \kdiam problem in the Euclidean metric (for any $\varepsilon>0$ and constant $k\in\N$).

That said, we are able to exploit the structure of the Euclidean metric to improve on the $\sqrt{1.5}$-factor \NP-hardness and prove the following: 

\begin{theorem}[formalized in Theorem \ref{thm: L2 1.304}]\label{thm:introell2}
 For every   $k\geq 3$, approximating \kdiam in the $\ell_2$-metric to a factor $1.304$ is \NP-hard.
\end{theorem}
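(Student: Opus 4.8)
The plan is to prove the statement by a gap-producing reduction from a suitable NP-hard constraint problem --- most naturally graph $3$-colouring, or the combinatorial core already used to establish Theorem~\ref{thm: L0 3/2} --- into Euclidean space, engineered so that ``YES'' instances admit a $3$-clustering of small diameter and ``NO'' instances force some cluster to be at least $1.304$ times larger. Fix $k \ge 3$; it suffices to handle $k = 3$ and then pad with $k - 3$ dummy points placed pairwise far apart and far from everything else, so that any near-optimal $k$-clustering isolates each dummy and reduces to the $3$-cluster problem on the rest. So I would build a point set $P \subseteq \R^d$ and thresholds $r_{\mathrm{yes}} < r_{\mathrm{no}}$ with $r_{\mathrm{no}}/r_{\mathrm{yes}} \ge 1.304$ such that: (completeness) a ``YES'' instance yields a partition of $P$ into $3$ parts each of diameter $\le r_{\mathrm{yes}}$; and (soundness) for a ``NO'' instance every partition of $P$ into $3$ parts has a part of diameter $\ge r_{\mathrm{no}}$. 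Note that the trivial ``Hamming $\hookrightarrow \ell_2$ by square-rooting distances'' embedding of the Theorem~\ref{thm: L0 3/2} instance only yields a $\sqrt{1.5} \approx 1.224$ gap, so the whole content of the theorem is designing an embedding that genuinely uses Euclidean geometry.

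The structural idea I would exploit is that in $\ell_2$ a ``ball-shaped'' cluster is much cheaper than in $\ell_1$: a set of points all at distance $\approx \rho$ from a common centre and pairwise nearly orthogonal as seen from that centre has diameter only $\approx \sqrt 2\,\rho$ rather than $2\rho$. Accordingly I would organise the reduction around three ``colour centres'' and attach to each vertex (or variable) a small spherical-code-like gadget of points sitting near one of the centres, using a \emph{fresh block of mutually orthogonal coordinates} for each gadget so that, when all the gadgets assigned to one colour are clustered together, every pairwise distance stays near $\sqrt 2\,\rho$ by the Pythagorean identity, giving $r_{\mathrm{yes}} \approx \sqrt 2\,\rho$. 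An edge (or clause) gadget is then placed so that a monochromatic edge --- which in a ``NO'' instance occurs for \emph{every} assignment of the points to three colour classes --- forces two of its points to distance close to $2\rho$ (or slightly more, using the separation between centres), giving $r_{\mathrm{no}}$. Completeness is then a direct computation: given a proper $3$-colouring (or satisfying assignment) send each colour class to its centre, and check via the independent coordinate blocks that each cluster has diameter $\le r_{\mathrm{yes}}$. The idealised orthogonal gadget would suggest a ratio approaching $\sqrt 2 \approx 1.414$, but the slack the soundness argument requires (below) costs a constant factor, and the actual value $1.304$ should pop out of optimising the few free parameters --- the radius $\rho$, the inter-centre separation, and the internal gadget geometry --- against the two thresholds; I expect this to be a short distance calculation once the gadget shape is chosen.

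The main obstacle is soundness: one must rule out a ``cheating'' $3$-clustering of $P$ that does not respect the intended vertex/centre structure --- for instance one that splits a single vertex-gadget across clusters, or that merges points the reduction never intended to be adjacent --- and show that for a ``NO'' instance every $3$-clustering still has a part of diameter $\ge r_{\mathrm{no}}$. I would attack this with a rounding lemma: from any $3$-clustering of $P$ of diameter $< r_{\mathrm{no}}$ one can extract a legal $3$-colouring of the source instance. This forces the geometry to be rigid, so I would (i) make each per-vertex gadget internally so tight --- e.g.\ a dense clique of near-coincident copies, or points whose only small-diameter grouping is ``all together with one centre'' --- that the whole gadget is pushed into a single colour class, and (ii) make the three centres far enough apart that no diameter-$< r_{\mathrm{no}}$ cluster can straddle two of them; then a cluster meeting two adjacent vertex-gadgets would contain a monochromatic-edge pair and blow past $r_{\mathrm{no}}$, contradicting non-$3$-colourability. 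The delicate point is that this rigidity must survive \emph{all the way up to the ratio $1.304$} rather than the softer $\sqrt{1.5}$: the gadgets must be simultaneously tight enough to force a colouring and loose enough that legal colourings keep the $\sqrt 2\,\rho$ diameter bound. If plain $3$-colouring does not afford enough room, natural fallbacks are to start from bounded-degree instances (to control gadget multiplicities) or from a robust/gap variant of the source problem; making this two-sided tension work out to exactly $1.304$ is the crux of the proof.
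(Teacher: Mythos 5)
Your high-level plan is pointed in the right direction --- reduce from a $3$-colouring-type problem, place points on (or near) a sphere, and exploit the $\ell_2$ phenomenon that many points at radius $\rho$ from a common centre and pairwise nearly orthogonal have diameter only about $\sqrt{2}\,\rho$ --- and your ``rounding lemma'' shape for soundness is exactly the right idea. But the specific mechanism you sketch for getting the constant $1.304$ is not the one the paper uses, and I do not think you would reach that number along the route you describe.

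Two concrete points of divergence. First, the paper does not build per-vertex gadgets placed near three separated ``colour centres'' with fresh orthogonal coordinate blocks. Instead it reduces from rainbow $3$-colouring of $3$-uniform hypergraphs, and for each hyperedge $\{a,b,c\}$ puts down a single gadget $P^{12}_{abc}$ consisting of discretised points on a $2$-sphere of radius $\sqrt{2}/2$ spanned by coordinates $a,b,c$, lying in the three ``sign-pattern'' octant-like regions $S^a_{abc}, S^b_{abc}, S^c_{abc}$. All hyperedge gadgets live on this same radius, so the whole instance sits in a ball of diameter $\sqrt{2}$, exactly as the barrier result demands; overlap between gadgets happens precisely through the shared coordinate when two hyperedges share a vertex, and the three ``poles'' $e_a, e_b, e_c$ serve as the vertex representatives. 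Completeness is then a one-line consequence of the sign structure: points in one colour class have componentwise identical signs, so their inner product is nonnegative, and two radius-$\sqrt{2}/2$ vectors at angle $\le \pi/2$ are at distance $\le 1$. Second, and more importantly, the soundness bound $1.304$ does not ``pop out of optimising a few free parameters''; there is a hard ceiling here. The continuous version of the gadget (the union $S^a_{abc}\cup S^b_{abc}\cup S^c_{abc}$) admits a $3$-clustering of diameter $\sqrt{1+\sqrt{2}/2}\approx 1.306$ that puts two of the poles in the same cluster, so no amount of tuning radii or separations will push a gadget of this shape past $\approx 1.306$. The paper instead fixes a particular discretisation ($\kappa=12$) and \emph{verifies by computer} that every $3$-clustering of $P^{12}_{abc}$ of diameter at most $1.304$ separates the three poles; that computational check is the load-bearing step and has no short hand-calculable substitute. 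Your per-vertex-gadget-with-orthogonal-blocks design, besides not matching the paper, is also structurally the wrong thing to tune: keeping gadgets in fresh orthogonal blocks pulls the instance off a single sphere and loses the nonnegative-inner-product completeness argument, and it is exactly the sharing of coordinates across hyperedges that lets the poles double as vertex representatives in the rounding. So the gap in your proposal is the gadget itself --- you would need to discover (or compute) a spherical gadget whose unique-separation property is certified at ratio $1.304$, and the paper's answer is a computer-verified discretised sphere, not a parameter optimisation.
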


This result is particularly noteworthy, since the Euclidean metric is near-isometrically embeddable into all $\ell_p$-metrics, and so hardness within a factor of 1.304 extends to all the $\ell_p$-metrics.

The result in Theorem~\ref{thm:introell2}  is surprising on two counts. First, in the literature we do not know how to exploit Euclidean metric structure to obtain higher hardness of approximation factors for sum clustering objectives (such as \kmean, \kmed, \minsum) better than the ones that can be derived from the Hamming metric hard instances~\cite{CK19,CKL22}. Even in the max clustering objectives (such as \kcen and \kdiam), \emph{planar geometry} is exploited to obtain strong hardness of approximation results~\cite{Feder_Greene_1988,mentzer1988approximability}, and Theorem~\ref{thm:introell2}, to the best of our knowledge, is the first time where high-dimensional Euclidean geometry is utilized to present strong hardness of approximation results.\footnote{Even in inapproximability results of other geometric optimization problems such as Euclidean Travelling Salesman Problem \cite{trevisan2000hamming} or (discrete) Euclidean Steiner Tree problem \cite{Fleischmann2023}, the structure of high-dimensional Euclidean spaces hinders (more than helps) in proving anything better than what can be inferred from the Hamming metric for those problems.} Second, whenever there are nontrivial approximation algorithms for a computational problem in a specific metric space, it is  notoriously hard to improve hardness of approximation results in that metric space for that problem.

Lastly, it is worth noting that our hardness results continue to hold even when the input pointset (of size $n$) is restricted to $O(\log n)$ dimensions. In the Euclidean metric, this follows from applying the Johnson-Lindenstrauss lemma \cite{JL84}; in the Hamming metric, it follows from our proof (where we can replace the  Hadamard codes used in the proof in this paper by small-biased sets~\cite{naor1990small}, consequently bringing the dimension down to   $\Theta(\log n)$). Finally, we remark  that when the number of dimensions is bounded by a constant, the \tdiam problem is in \P.

The state-of-the-art hardness bounds and approximation algorithms for the \kdiam problem, for constant $k\geq 3$, is summarized in Table~\ref{table}.
 
\begin{table}[h]
\begin{center}\renewcommand{\arraystretch}{1.45}
\begin{tabular}{|c|c|c|}
    \hline
    Metric & \thead{\NP-Hardness\\ Approximation Factor} & \thead{Polynomial Time\\  Approximation Factor} \\\hline 
        
  \multirow{2}{*}{$\ell_\infty$}  & $\mathbf{2-\varepsilon}$ & $\mathbf{2}$ \\ 
    & \cite{Megiddo_1990} &   \cite{Gonzalez_1985} \\ \hline
     \multirow{2}{*}{$\ell_0/\ell_1$}   & $\mathbf{1.5-\varepsilon}$ & $\mathbf{2}$  \\ 
      & {\color{red} [This Paper]} & \cite{Gonzalez_1985} \\\hline
     \multirow{2}{*}{$\ell_2$} & $\mathbf{1.304}$ & $\mathbf{1.415}$  \\ 
      & {\color{red} [This Paper]} & \cite{badoiu2002approximate} \\\hline
\end{tabular}
\caption{State-of-the-art Approximability Results for \tdiam}\label{table}
\end{center}
\end{table}
\subsection{Our Techniques}









\paragraph{Clustering and coloring.} There is a natural connection between the geometric problem of \kdiam and the graph theoretic problem of $k$-coloring. Namely, there is a simple reduction from \kdiam to $k$-coloring where given a pointset $P$, we can construct a graph $G_P$ where points in $P$ correspond to vertices in $G_P$ and edges are drawn between vertices if the distance between their corresponding points exceeds a threshold $\beta>0$. Then, $G_P$ is $k$-colorable if and only if $P$ can be partitioned into $k$ clusters, each of which has diameter at most $\beta$.

\paragraph{A simple approach.} The above reduction from \kdiam to $k$-coloring also motivates a simple approach for reducing $k$-coloring to the approximation version of \kdiam. Let $G$ be a graph, and suppose we can embed the vertices of $G$ as a set of points $P_G$ in a metric space, such that the following property holds: if two vertices are adjacent in $G$, the distance between their corresponding points in $P_G$ is at least $r\beta$ (for some $r > 1$), and if the two vertices are nonadjacent, the distance between their corresponding points in $P_G$ is at most $\beta$. Then, if $G$ is $k$-colorable, there is a clustering of $P_G$ with diameter at most $\beta$. However, if $G$ is not $k$-colorable, then any $k$-clustering of $P_G$  has diameter at least $r\beta$ (i.e., any partition of $P_G$ into $k$ clusters contains at least one cluster of diameter at least $r\beta$). Therefore, approximating \kdiam within a factor strictly less than $r$ is at least as hard as determining if $G$ is $k$-colorable.

We formalize this technique as follows. For a metric space $(X, \textsf{dist})$, a graph $G = (V,E)$, and an approximation ratio $r>1$, we say that an \textit{$r$-embedding} of $G$ is a map $\varphi:V \to X$ such that, for some $\beta>0$,
\begin{enumerate}
    \item if $(u,v) \notin E$ then $\textsf{dist}(\varphi(u), \varphi(v)) \le \beta$, and 
    \item if $(u,v) \in E$ then $\textsf{dist}(\varphi(u), \varphi(v)) \ge r\beta$.
\end{enumerate}
We refer to $\beta$ as the \textit{short distance} of the $r$-embedding. 
The first property guarantees that if $G$ is $k$-colorable, then the optimal $k$-clustering diameter of $\varphi(V)$ is at most $\beta$. The second property guarantees that if $\varphi(V)$ has a $k$-clustering with diameter less than $r\beta$, then $G$ is $k$-colorable. 

In particular, if $G$ is a family of graphs on which $k$-coloring is \np-hard, and $\varphi$ is an efficiently computable $r$-embedding of $G$, then it is \np-hard to approximate \kdiam within a factor better than $r$.  We now present a proof-of-concept hardness of approximation result using this $r$-embedding framework.

\paragraph{Warm up in the $\ell_{\infty}$-metric.} We describe a simple $2$-embedding of general graphs into the $\ell_\infty$-metric, as given in \cite{Megiddo_1990}.
    Let $G=(V,E)$ be an arbitrary graph with $V = [n]$. Define a mapping $\varphi : V \to \R^n$ via $\varphi(u) = (a_{u,1},\ldots,a_{u,n})\in\R^n$ given by: $$a_{u,v} = \begin{cases}
        0 & \text{if } (u,v)\in E\\
        2 & \text{if } u=v\\
        1 & \text{otherwise}
    \end{cases}.$$

    It can be checked that $\varphi$ is a $2$-embedding in the $\ell_\infty$-metric. Thus, \kdiam is hard to approximate within a factor of $2$ in the $\ell_\infty$-metric for any $k\geq 3$, proving that the  $2$-approximation algorithm is optimal in this setting.

    We can further extend the above \NP-hardness to $\ell_p$-metrics in the following way. It is known that $3$-coloring remains \np-hard even when restricted to $4$-regular graphs. When $\varphi$ is considered in the $\ell_p$-metric, it is an $r$-embedding of $4$-regular graphs for \[r = \del{\frac{2^{p+1}+6}{10}}^{1/p}.\] Thus, \kdiam (for $k\geq 3$) is hard to approximate within a factor of $\sqrt{7/5}$ in the $\ell_2$-metric. However, in the $\ell_1$-metric, this approach does not give any nontrivial hardness, suggesting a need for different techniques.
    
\paragraph{Trevisan's embedding in the $\ell_1$-metric.}
Trevisan utilized Hadamard codes to prove inapproximability results for geometric problems \cite{trevisan2000hamming}. Recall that for $m$ a power of $2$, a Hadamard code is a set of $m$ elements in $\{0,1\}^m$ whose pairwise $\ell_1$ distances are all $m/2$. We outline a direct application of Trevisan's embedding to obtain a nontrivial hardness of approximation result for \tdiam in the $\ell_1$-metric. In particular, we construct a general $5/4$-embedding of graphs which are $4$-regular and $4$-edge-colorable. It can be shown that $3$-coloring is \np-hard on this family of graphs.

On an input graph $G=(V,E)$, let $m$ be a power of $2$ that is at least $|V|$. Since $G$ is $4$-regular and $4$-edge colorable, we can decompose $E$ as the disjoint union of matchings $M_1,M_2,M_3,M_4\subseteq E$. Then, we embed each $v\in V$ as the concatenation $\varphi(v) = (a_{v,1}, a_{v,2}, a_{v,3}, a_{v,4})\in \{0,1\}^{4m}$, where the strings $a_{v,i}\in\{0,1\}^m$ satisfy the following properties for each $i\in [4]$: \begin{enumerate}
    \item If $(u,v)\in M_i$, then $a_{u,i} = \comp{a_{v,i}}$ and thus $\norm{a_{u,i}-a_{v,i}}_1 = m$.
    \item If $(u,v)\notin M_i$, then $\norm{a_{u,i}-a_{v,i}}_1 = m/2$.
\end{enumerate}
One can construct such an embedding by choosing each $a_{v,i}$ to be either a Hadamard codeword or the complement of a Hadamard codeword. If $(u,v)\in E$, then $\norm{\varphi(u)-\varphi(v)}_1 = \frac{5m}{2}$, and otherwise, $\norm{\varphi(u)-\varphi(v)}_1 = 2m$. Therefore, $\varphi$ is a $5/4$-embedding of $G$, proving that \kdiam is hard to approximate within a factor of $5/4$ in the $\ell_1$-metric for any $k\geq 3$.

As shown by these preliminary results, the $r$-embedding framework is a straightforward way to prove hardness of approximation results for \kdiam. However, it turns out that using this framework is restrictive, as it forces a one-to-one correspondence between vertices and points, and it forbids distances in the range $(\beta,r\beta)$. We can relax both of these restrictions using a different reduction from $k$-coloring to \kdiam. This novel framework, which we call an \textit{$r$-cloud system}, allows us to obtain significantly better inapproximability results in the $\ell_1$-metric and the $\ell_2$-metric. 
    
\paragraph{The $r$-cloud system.} 
We now explain the intuition behind the $r$-cloud system for reductions from coloring to \kdiam. In the previous discussion, we reduced from the problem of $k$-coloring on graphs. For the $r$-cloud construction, a more natural problem to consider is panchromatic $k$-coloring on $k$-uniform hypergraphs.\footnote{In panchromatic $k$-coloring, we are given a $k$-uniform hypergraph, and the task is to color the vertices such that in each hyperedge, every vertex is assigned a distinct color.} A simple padding argument shows that this problem is at least as hard as $k$-coloring on graphs. Next, we explain the definition of $r$-cloud system as a natural framework for reducing this problem to the \kdiam problem. The formal definition and proof of the reduction are given in Section \ref{sec: cloud reduction}.

Let $\mc{H} = (V,E)$ be a $k$-uniform hypergraph and $(X, \textsf{dist})$ be a metric
space. The goal is to produce a pointset $P\subset X$ such that panchromatic $k$-coloring on $\mc{H}$ reduces to approximating \kdiam on $P$ within a factor of $r$. We start by associating each vertex $v\in V$ with a set $P_v\subset X$, where $P := \bigcup_{v\in V} P_v$ and the sets $P_v$ are pairwise disjoint. Note the contrast to the $r$-embedding framework, in which each vertex corresponded to only a single point in $X$. We will also define sets $P_e$ for every edge $e\in E$, which we will motivate later.  In the next couple of paragraphs, we will specify two conditions, \emph{proximity} and \emph{spread}, that we would like $P$ to satisfy in order to achieve the completeness and soundness guarantees for the reduction from panchromatic $k$-coloring to \kdiam. 

For the completeness guarantee of the reduction to hold, we must first show that if $\mc{H}$ is panchromatic $k$-colorable then there is a $k$-clustering of $P$ with low diameter, say at most some threshold $\beta$. Given a $k$-coloring of $\mc{H}$, we create a cluster for each of the $k$ colors, and, for each $v\in V$, we place the entire set $P_v$ in the cluster corresponding to the color of $v$. Since vertices in $\mc{H}$ that are of the same color must be nonadjacent, and we would like every cluster to have diameter at most $\beta$, the proximity condition that $P$ must satisfy is naturally defined as follows:
\begin{tcolorbox}[height=40pt,colback=cyan!10!white]
\begin{equation}
    \text{For every nonadjacent } v,v'\in V\text{ we have } \textsf{diam}(P_v\cup P_{v'})\leq \beta. \tag{proximity}
\end{equation}
\end{tcolorbox}

 Note that $v$ and itself are also considered nonadjacent.  

    Second, for the soundness guarantee of the reduction to hold, we must show that if $\mc{H}$ is not panchromatic $k$-colorable, then every $k$-clustering of $P$ has large diameter---namely, at least $r\beta$. This can be enforced by showing that any $k$-clustering of $P$ with diameter strictly less than $r\beta$ corresponds to a panchromatic $k$-coloring of $\mc{H}$. A natural way of extracting a $k$-coloring of $\mc{H}$ from a $k$-clustering of $P$ is to identify each vertex $v\in V$ with a designated point $\rho(v)\in P_v$. Given a $k$-clustering of $P$, we create a color for each of the $k$ clusters, and we assign each $v\in V$ the color associated with the cluster of $\rho(v)$. Then, provided that the clustering has diameter less than $r\beta$, we have the following condition: for each edge $e = \{v_1,\ldots, v_k\}\in E$, the points $\rho(v_1),\ldots,\rho(v_k)$ must all be in different clusters. To make the condition easier to verify, we localize it to each edge. In particular, we associate the edge $e$ with a set $P_e\subset P$ that contains the points $\rho(v_1),\ldots,\rho(v_k)$. Then, the spread condition that $P$ must satisfy is naturally defined as follows:
\begin{tcolorbox}[height=48pt,colback=cyan!10!white]
\[\begin{array}{c}
    \text{For every $k$-clustering of the set $P_e$ with diameter strictly less than $r\beta$,} \\
    \text{the points $\rho(v_1),\ldots,\rho(v_k)$ must all be in different clusters.} \tag{spread}
\end{array}\]
\end{tcolorbox}

We say $P$ is an $r$-cloud system of $\mathcal{H}$ if it satisfies the above defined proximity and spread conditions. Importantly, neither condition places a restriction on individual distances within the pointset. Indeed, the final pointset $P$ may contain distances in the range $(\beta, r\beta)$, allowing for more general constructions than the $r$-embedding framework.

For convenience, we further constrain the sets $P_e$ without losing reasonable generality. First, it is most natural for $P_e$ to be disjoint with all $P_v$ for $v \not \in e$. To justify this, recall that whether or not an edge $e$ is properly colored depends only on the colors of its constituent vertices $v\in e$. Equivalently, it is reasonable to assume $P_e \subseteq \bigcup_{v \in e} P_v$. Now, let $\tau(e,v) := P_e\cap P_v$. It follows that $P_e$ can be written as \[P_e = \bigcup_{v\in e} P_e\cap P_v = \bigcup_{v\in e} \tau(e,v).\]
Similarly, it is reasonable to assume that $P_v\subseteq \bigcup_{e\in E,\, v\in e} P_e$. Thus, $P_v$ can be written as \[P_v = \bigcup_{\substack{e\in E \\ v\in e}} P_e \cap P_v = \bigcup_{\substack{e\in E \\ v\in e}} \tau(e,v).\] Then, the entire $r$-cloud construction is characterized by the sets $\tau(e,v)$ over all $e\in E$ and $v\in e$. In other words, an $r$-cloud system is simply a pair of mappings $(\rho, \tau)$ satisfying the proximity and spread conditions. This is the framing used in Definition \ref{def: cloud}.

Since $r$-cloud systems can be used to reduce panchromatic $k$-coloring to approximate \kdiam, then by noting that the panchromatic $k$-coloring problem on $k$-uniform hypergraphs is \np-hard for $k\geq 3$, we have the following result:

\begin{theorem}[formalized in Theorem \ref{thm: main reduction}] For $k\geq 3$, suppose there is an efficiently computable $r$-cloud system in a metric space $(X,\textup{\textsf{dist}})$ for the family of $k$-uniform hypergraphs. Then, for any $\varepsilon > 0$, approximating \kdiam in $(X,\textup{\textsf{dist}})$ to a factor $r-\varepsilon$ is \np-hard.
\end{theorem}

In Sections \ref{sec: L1} and \ref{sec: L2}, we construct a $3/2$-cloud system in the $\ell_1$-metric and a $1.304$-cloud system in the $\ell_2$-metric, both for $3$-uniform hypergraphs. This proves that for $k\geq 3$, \kdiam is hard to approximate within a factor of $3/2$ in the $\ell_1$-metric and $1.304$ in the $\ell_2$-metric.

\paragraph{Constructing a $3/2$-cloud system in the $\ell_1$-metric.} Let $\mc{H} = (V,E)$ be a $3$-uniform hypergraph, and for simplicity let $m:=|V|$ be a power of $2$. We associate each vertex $v\in V$ with a Hadamard codeword $\textbf{h}_v\in\{0,1\}^m$, and we define $\rho(v)=(\textbf{h}_v,\textbf{h}_v)\in\{0,1\}^{2m}$. For each edge $e = \{x,y,z\}$, the set $P_e$ is then comprised of pairs of $\textbf{h}_x, \textbf{h}_y, \textbf{h}_z$ and their complements. Specifically, $\tau(e,x)$ consists of $5$ out of the $9$ pairs that can be formed using elements in $\cbr{\textbf{h}_x,\comp{\textbf{h}}_y,\comp{\textbf{h}}_z}$, and $\tau(e,y),\tau(e,z)$ are constructed symmetrically.

With this construction, it can be shown that the proximity condition holds with $\beta = m$. For any vertex $v\in V$, recall that $P_v$ is the union of $\tau(e,v)$ over all $e\in E$ that contain $v$. This means that the only non-complemented codeword used by any point in $P_v$ is $\textbf{h}_v$, and also that no point in $P_v$ contains the complement $\comp{\textbf{h}}_v$. As a result, no two points in $P_v$ can contain complementary codewords. By the distance property of the Hadamard code, $\textsf{diam}(P_v)\leq m$. In the proof of Lemma \ref{lem: L1 cloud}, we slightly extend this argument to show that also $\textsf{diam}(P_v\cup P_{v'})\leq m$ for nonadjacent $v,v'\in V$, which is exactly the proximity condition.

Proving that our construction satisfies the spread condition is 
more involved. We start by fixing an edge $e=\{x,y,z\}$. Then, we argue through casework that any $3$-clustering of $P_{\{x,y,z\}}$ in which $(\textbf{h}_x,\textbf{h}_x)$ and $(\textbf{h}_y,\textbf{h}_y)$ are clustered together necessarily has diameter at least $3m/2$. Since our construction of $\tau(e,\cdot)$ is symmetric in $(x,y,z)$, the spread property follows.

\paragraph{Constructing a $1.304$-cloud system in the $\ell_2$-metric.}

Let $\mc{H} = (V,E)$ be a $3$-uniform graph with $m := |V|$. We associate each vertex $v\in V$ with a standard basis vector $\textbf{e}_v\in\R^m$, and we define $\rho(v)=\textbf{e}_v$. For each edge $e = \{x,y,z\}$, the set $P_e$ is then constructed on the surface of the unit sphere in the three dimensional subspace of $\R^m$ spanned by $\cbr{\textbf{e}_x, \textbf{e}_y, \textbf{e}_z}$. Specifically, consider the region on this sphere where the $\textbf{e}_x$ component is nonnegative, the $\textbf{e}_y$ and $\textbf{e}_z$ components are nonpositive, and all other components are 0. We define $\tau(e,x)$ to be a finite net of points on this region. The sets $\tau(e,y)$ and $\tau(e,z)$ are formed symmetrically. An illustration of these regions is shown below in Figure \ref{fig: intro L2 diagram}.

\begin{figure}[!h]
    \centering
    
    \begin{tikzpicture}[scale=0.6]

        \shade[ball color = gray!40, opacity = 0.2] (0,0) circle (4);
        \draw (0,0) circle (4);
        \draw (-4,0) arc (180:360:4 and 0.6);
        \draw[dashed] (-4,0) arc (180:360:4 and -0.6);
        \draw (0,4) arc (90:270:0.5 and 4);

        \fill[cyan, opacity = 0.3] (0,4) arc (90:188.75:0.5 and 4) -- (-4,0) -- (0,4) -- cycle;
        \fill[cyan, opacity = 0.3] (-4,0) arc (180:98:4 and -0.6);
        \fill[cyan, opacity = 0.3] (-4,0) arc (180:90:4 and 4);
        \fill[green, opacity=0.3] (0,-4) arc (270:360:4 and 4);
        \fill[green, opacity=0.3] (4,0) arc (180:277:-4 and 0.6) -- (0,-4) -- (4,0) -- cycle;
        \fill[green, opacity=0.3] (0,-4) arc (270:360:-0.5 and 3.5) -- (0,-4) -- cycle;
        \fill[red, opacity=0.2] (4,0) arc (0:90:4 and 4) -- (4,0) -- cycle;
        \draw[dashed] (0,4) arc (90:171.5:-0.5 and 4);
        \fill[red, opacity=0.2] (0,4) arc (90:171.5:-0.5 and 4) arc (90:178.5:-3.5 and 0.6) -- (0,4) -- cycle;

        \filldraw(4,0) node[xshift=0.5cm]{\small$-\textbf{e}_x$} circle(2pt);
        \filldraw(0,4) node[yshift=0.3cm]{\small$-\textbf{e}_y$} circle(2pt);
        \filldraw(-0.49,-0.6) node[xshift=-0.4cm, yshift=-0.3cm]{\small$-\textbf{e}_z$} circle(2pt);
        \filldraw(-4,0) node[xshift=-0.4cm]{\small$\textbf{e}_x$} circle(2pt);
        \filldraw(0,-4) node[yshift=-0.4cm]{\small$\textbf{e}_y$} circle(2pt);
        \filldraw(0.49,0.6) node[xshift=-0.1cm, yshift=-0.3cm]{\small$\textbf{e}_z$} circle(2pt); 
    \end{tikzpicture}
    \caption{Spherical regions used to define $\tau(e,\cdot)$ for $e=\{x,y,z\}$. Specifically, $\tau(e,x)$, $\tau(e,y)$, and $\tau(e,z)$ are finite nets of the blue, green, and red regions, respectively.} 
    \label{fig: intro L2 diagram}
\end{figure}

For this construction, the proximity condition holds with $\beta = \sqrt{2}$, and our proof is similar to the proof for the $\ell_1$-metric. As before, recall that for any vertex $v\in V$, the set $P_v$ is the union of $\tau(e,v)$ over all $e\in E$ that contain $v$. This means that the only coordinate on which a point in $P_v$ may be positive is the $v^{\text{th}}$ coordinate, and also that no point in $P_v$ has a negative $v^{\text{th}}$ component. Since $P_v$ consists of unit vectors, it follows directly that $\textsf{diam}(P_v)\leq \sqrt{2}$. In the proof of Lemma \ref{lem: L2 cloud}, we argue that $\textsf{diam}(P_v\cup P_{v'})\leq \sqrt{2}$ for nonadjacent $v,v'\in V$, thus proving the proximity condition.

To prove the spread condition, we fix an edge $e=\{x,y,z\}$. Then, we show that any $3$-clustering of $P_{\{x,y,z\}}$ in which $\textbf{e}_x$ and $\textbf{e}_y$ are clustered together must have diameter greater than $1.304\cdot\sqrt{2}$. Since the number of $3$-clusterings of $P_e$ is large even for small nets $\tau(e,\cdot)$, we verify this property using a computer search.

We remark that no matter how dense of a net we choose, we cannot improve substantially on the factor $1.304$. This is because it is possible to cluster the set $P_{\{x,y,z\}}$ with diameter at most $\sqrt{2+\sqrt{2}}\approx 1.307\cdot\sqrt{2}$ such that $\textbf{e}_x$ and $\textbf{e}_y$ are in the same cluster. Geometrically, the length $\sqrt{2+\sqrt{2}}$ is the distance between $\textbf{e}_y$ and the midpoint of the arc between $\textbf{e}_x$ and $-\textbf{e}_y$. Our hardness factor of $1.304$ was obtained using a sufficiently dense net for which verifying the spread property was computationally tractable. 

\paragraph{Barriers to proving hardness.} In Section \ref{sec: barriers}, we outline two barriers to showing improved hardness of approximation for \kdiam. Recall that one limitation of the $r$-embedding technique is that it forbids distances strictly between $\beta$ and $r \beta$ (for some $\beta>0$). Our first barrier result, formalized in Corollary \ref{cor: L0 barrier}, shows that it is not possible to prove hardness above a factor of $5/3$ in the $\ell_1$-metric with the $r$-embedding framework, or using any construction of pointsets with such a gap in the set of distances.

Our next barrier says that if a pointset $P$ in the $\ell_2$-metric is not contained within a sphere of diameter $\Delta\cdot \sqrt{2}$, where $\Delta$ is the optimal $3$-clustering diameter of $P$, then there is a polynomial time algorithm to approximate \tdiam on $P$ within a factor of $\sqrt{2}-\varepsilon$. So, in a proof of hardness up to a factor of $\sqrt{2}$ in the $\ell_2$-metric, the hard instance must effectively lie within a sphere of diameter $\Delta\cdot\sqrt{2}$. This has an interesting graph theoretic consequence: if \tdiam is hard to approximate within a factor of 2 in the $\ell_1$-metric or within a factor of $\sqrt{2}$ in the $\ell_2$-metric, then the graphs associated with the hard instance must have unbounded odd girth. This is formalized in Theorems \ref{thm: in a sphere barrier} and \ref{thm: odd girth barrier}.

\subsection{Open Problems}

A couple of open problems immediately stem from this work. 

\begin{itemize}
    \item In this work, we proved strong inapproximability results for \kdiam (where $k$ is fixed) in the Euclidean and $\ell_1$-metrics, but our understanding is far from tight in both these metrics, and it remains an important open problem to bridge the gap between the \NP-hardness and polynomial time approximability factors. In order to do so, some of the barriers that need to be overcome are detailed in Section~\ref{sec: barriers}. However, an intriguing (intermediate) direction that is worth exploring is if we can obtain tight hardness of approximation factors for \kdiam in the Euclidean metric (or $\ell_1$-metric) when $k$ is some large constant. 
    \item We believe that the techniques introduced in this paper, in particular the graph embedding to $\ell_1$-metric and Euclidean metric, might be useful to prove improved hardness of approximation results for other geometric problems  as well. For example, in \cite{CKL21} the authors gap reduce graph coloring problem  to the  continuous \kmean problem in high-dimensional $\ell_\infty$-metric, to obtain strong inapproximability results that are higher than the ones derived from covering problems in \cite{CK19,CKL22}. We wonder if it is possible to use our embedding schemes to obtain strong hardness of approximation results for the high dimensional continuous  Euclidean \kmean problem, improving upon the current state-of-the-art 1.36 \NP-hardness factor based on the Johnson Coverage Hypothesis \cite{CKL22}. 
\end{itemize}

\subsection{Organization of the Paper}

The organization of the paper is as follows. In Section \ref{sec: preliminaries}, we overview some definitions and notations  related to the clustering and coloring problems considered throughout the paper. In Section \ref{sec: cloud reduction}, we present the central reduction of the paper using the $r$-cloud system. In Sections \ref{sec: L1} and \ref{sec: L2}, we present specific constructions of $r$-cloud systems in the $\ell_1$-metric and the $\ell_2$-metric, respectively. In Section \ref{sec: barriers}, we outline barriers to proving better hardness of approximation.  

\section{Preliminaries} \label{sec: preliminaries}
This section contains definitions, remarks, and notation that will be used throughout this paper.

\subsection{Diameter Objective}

\begin{definition}[diameter]
    Given a metric space $(X,\textsf{dist})$ and a set $C \subset X$, we say the diameter of $C$ is 
    \[\textsf{diam}(C) := \max_{x,y \in C} \textsf{dist}(x,y).\]
    For a collection of subsets $C_1, \dots, C_k\subset X$, we say 
    \[\textsf{diam}(\{C_1, \dots, C_k\}):=\max_{\substack{i \in [k]\\ x,y \in C_i}} \textsf{dist}(x,y)\]
    is the diameter of the collection.
\end{definition}

\begin{definition}[$k$-clustering]
    Given a set $P$, we say that a collection $\{C_1, \dots C_k\}$ is a $k$-clustering of $P$ if $C_1, \dots C_k$ are pairwise disjoint and $C_1 \cup \cdots \cup C_k = P$.
\end{definition}

\begin{definition}[\kdiam clustering problem]
    Let $(X,\textsf{dist})$ be a metric space and let $k$ be a constant. Given as input a finite set $P \subset X$, find a $k$-clustering $\{C_1, \dots, C_k\}$ that minimizes $\textsf{diam}(\{C_1, \dots, C_k\})$.
\end{definition}

\begin{definition}[$r$-approximate \kdiam clustering problem]
    Let $(X,\textsf{dist})$ be a metric space and let $k$ be a constant. Given as input a finite set $P \subset X$, let
    \[\Delta:= \min \textsf{diam}(\{C_1, \dots, C_k\}),\]
    where the minimum is taken over all possible $k$-clusterings of $P$.
    Find a $k$-clustering $\{C_1, \dots, C_k\}$ with diameter at most $r \Delta$.
\end{definition}

\begin{remark} \label{rem: larger k is harder}
    Let $P$ be a pointset and let $p$ be a point whose distance to $P$ is more than $r \cdot \textsf{diam}(P)$. Then, $r$-approximate \kdiam on $P$ reduces to $r$-approximate $\mathsf{Max}$-$(k+1)$-$\mathsf{Diameter}$\xspace on $P \cup \{p\}$, because any $(k+1)$-clustering whose diameter is at most $r\Delta$ must have $\{p\}$ as a standalone cluster. Thus, if \kdiam clustering is \np-hard, then $\mathsf{Max}$-$(k+1)$-$\mathsf{Diameter}$\xspace is \np-hard. In other words, if \tdiam is hard to approximate to some factor $r$, then so is \kdiam for any $k\geq 3$.
\end{remark}

We now restate the definition of $r$-embedding mentioned in Section \ref{sec: introduction}.

\begin{definition}[$r$-embedding] 
    Let $G=(V,E)$ be a graph and $(X,\textsf{dist})$ a metric space. For $r>1$, we say a map $\varphi : V\to X$ is an $r$-embedding of $G$ with short distance $\beta>0$ if the following two conditions hold for every $u,v\in V$: \begin{itemize}
        \item $(u,v)\notin E\implies\textsf{dist}(\varphi(u), \varphi(v))\leq \beta$
        \item $(u,v)\in E\implies\textsf{dist}(\varphi(u),\varphi(v))\geq r\beta$
    \end{itemize}
\end{definition}

\begin{definition} \label{def: gamma}
    Let $k$ be a constant, $r$ a ratio, and $P$ a pointset. Let $\Delta$ be the optimal $k$-clustering diameter of $P$. Define $\Gamma_{k,r}(P)$ to be the graph in which the points in $P$ are vertices, with edges between any two points whose distance is greater than $r\Delta$. 
\end{definition}

\subsection{Coloring Problems}

\begin{definition}
    A hypergraph $\mc{H}$ consists of a set of vertices $V$ and a set of hyperedges $E$, where each hyperedge $e\in E$ is simply a subset $e\subseteq V$. We say that a hypergraph $\mc{H}$ is \textit{$k$-uniform} if $|e| = k$ for every $e\in E$.
\end{definition}

\noindent We define the following coloring problem on $k$-uniform hypergraphs.

\begin{definition}[panchromatic $k$-coloring problem]
    Let $k$ be a constant. Given a $k$-uniform hypergraph $\mc{H}=(V,E)$ as input, find an assignment of $k$ colors to the vertices of $G$ such that for any hyperedge $e \in E$, all vertices in $e$ are given distinct colors. Such an assignment is called a panchromatic $k$-coloring.
\end{definition}

\begin{remark}\label{rem: panchromatic hard}
    Note that $3$-regular graphs can be viewed as $3$-uniform $2$-regular hypergraphs: edges correspond to vertices, and triples of edges incident to the same vertex correspond to hyperedges. Since $3$-edge coloring is \np-hard on $3$-regular graphs \cite{holyer1981np}, we know that panchromatic $3$-coloring is \np-hard on $3$-uniform hypergraphs.
\end{remark}

\subsection{Hadamard Codes}

For a bit string $x \in \{0,1\}^m$, we denote its bitwise complement by $\overline{x}$.

\begin{definition}[Hadamard code]
    Let $m$ be a power of 2. There exists a subset of $\{0,1\}^m$ of size $m$ such that the Hamming distance between distinct elements is exactly $m/2$. We call this a Hadamard code and denote it by $\text{Had}_m^+$. We also define $\text{Had}_m^-:= \{\comp{\textbf{h}} : \textbf{h} \in \text{Had}_m^+\}$, which can alternatively be seen as an affine shift of $\text{Had}_m^+$ by the vector $\textbf{1} = (1,1,\ldots, 1)$ in the vector space $\F_2^m$. Then, let \[\had_m := \text{Had}_m^+ \cup \text{Had}_m^-.\]
\end{definition}

\noindent Given $\textbf{h}_i,\textbf{h}_j\in \had_m$, we will denote the concatenation of $\textbf{h}_i$ and $\textbf{h}_j$ by $(\textbf{h}_i,\textbf{h}_j)\in \{0,1\}^{2m}$.

\begin{remark}
    The set $\had_m$ has the following property: for every $\textbf{h} \in \had_m$, there is exactly one element in $\had_m$ that is distance $m$ away, namely $\comp{\textbf{h}}$. All other elements have distance $m/2$ to $\textbf{h}$.
\end{remark}

\section{The $r$-cloud System} \label{sec: cloud reduction}

In this section, we introduce a geometric object that can be used to prove inapproximability results for the \kdiam problem. 

\begin{definition}
    For a hypergraph $\mc{H} = (V,E)$, let $I_\mc{H}\subset E\times V$ denote its set of incidences: \[I_\mc{H} = \{(e,v) : e\in E,\, v\in e\}.\]
\end{definition}

\begin{definition}[$r$-cloud system] \label{def: cloud}
    Let $(X,\textsf{dist})$ be a metric space and $U\subseteq X$ be a finite set of points. Let $\mc{H}=(V,E)$ be a $k$-uniform hypergraph. Suppose that $(\rho,\tau)$ is a pair of mappings for which $\rho : V\to U$ is injective, and $\tau : I_\mc{H} \to \mathcal{P}(U)$ satisfies $\rho(v) \in \tau(e,v)$ for all $(e,v)\in I_{\mc{H}}$. For $r>1$, we say that $(U, \rho,\tau)$ is an $r$-cloud system of $\mc{H}$ if the following properties hold for some $\beta > 0$:
    \begin{itemize}
        \item \textbf{(Proximity)} For $v\in V$, define \[P_v := \bigcup_{\substack{e\in E \\ (e,v)\in I_{\mc{H}}}} \tau(e,v).\] For any nonadjacent $v,v'\in V$, we have $\textsf{diam}(P_v\cup P_{v'})\leq \beta$. Note that we consider $v$ to be nonadjacent to itself.
        \item \textbf{(Spread)} For $e\in E$, define \[P_e := \bigcup_{v\in e} \tau(e,v)\] Let $e = \{v_1, \dots, v_k\}\in E$. For any $k$-clustering of $P_e$ with diameter strictly less than $r\beta$, the points $\rho(v_1), \dots, \rho(v_k)\in P_e$ are all in different clusters. 
    \end{itemize}
    We refer to $\beta$ as the \textit{short distance} of $(U,\rho,\tau)$. Furthermore, we say that an $r$-cloud system $(U, \rho,\tau)$ is \textit{efficiently computable} if there are $\poly(|V|)$-time algorithms to identify $U$ and evaluate $\rho$ and $\tau$.
\end{definition}

The existence of an efficiently computable $r$-cloud system gives a polynomial time reduction from panchromatic $k$-coloring to \kdiam.

\begin{theorem}\label{thm: main reduction}
    Let $k\in \N$ be a constant and let $\mc{H} = (V,E)$ be a $k$-uniform hypergraph. Suppose $\mc{H}$ has an efficiently computable $r$-cloud system $(U, \rho, \tau)$ with short distance $\beta$, where $U\subseteq X$ for a metric space $(X,d)$. Let 
    \[P := \bigcup_{(e,v)\in I_\mc{H}}\ \tau(e,v) = \bigcup_{v\in V} P_v = \bigcup_{e\in E} P_e.\]
    Then, there is a polynomial time algorithm which takes as input $\mc{H}$ and outputs $P$ with the following guarantees:
    \begin{itemize}
        \item \textbf{\textup{Completeness:}} If $\mc{H}$ is panchromatic $k$-colorable, then there exists a $k$-clustering of $P$ with diameter at most $\beta$.
        \item \textbf{\textup{Soundness:}} If $\mc{H}$ is not panchromatic $k$-colorable, then any $k$-clustering of $P$ has diameter at least $r\beta$.
    \end{itemize}
\end{theorem}
\begin{proof}
    We note that, since $|I_\mc{H}|\leq n^k$, the pointset $P$ is computable in polynomial time. 
    
    We first prove completeness. Let $\mc{H}$ be panchromatic $k$-colorable, and let $c : V\to [k]$ be a panchromatic $k$-coloring of $\mc{H}$. We define a $k$-clustering $\{C_1,\ldots,C_k\}$ of $P$ given by \[C_i := \bigcup_{v \::\: c(v) = i} P_v.\] To show that this clustering has diameter at most $\beta$, it suffices to show that for any $i\in [k]$ and $x,y\in C_i$, we have $d(x,y)\leq \beta$. This follows from the observation that $x\in P_v$ and $y\in P_{v'}$ for some $v,v'\in V$, and since $c(v) = c(v')$, either $v,v'$ are nonadjacent or $v=v'$. Applying the proximity property of Definition \ref{def: cloud} gives $d(x,y)\leq \textsf{diam}(P_v\cup P_{v'}) \leq \beta$.

    Now, we prove soundness. Suppose $\{C_1, \dots, C_k\}$ is a $k$-clustering of $P$ with diameter less than $r\beta$. We construct a $k$-coloring $c : V \to [k]$ as follows: for each $v\in V$, let $c(v) = i$ if $\rho(v)\in C_i$. Now, let $e=\{v_1,\ldots,v_k\}\in E$ be arbitrary. By the spread property of Definition \ref{def: cloud}, we have that $c(v_1),\ldots,c(v_k)$ are all distinct, and thus $c$ is panchromatic.
\end{proof}

By Remark \ref{rem: panchromatic hard}, the above theorem gives a framework of proving hardness of approximating \kdiam.

\section{Hardness of Approximation in the $\ell_1$-metric} \label{sec: L1}

In this section, we demonstrate the existence of an efficiently computable $3/2$-cloud system of any 3-uniform hypergraph in the $\ell_1$-metric, hence proving that \kdiam is \NP-hard to approximate within a factor of $3/2$ in the $\ell_1$-metric.

\begin{definition} \label{def: L1 cloud}
    Let $\mc{H}$ be a $3$-uniform hypergraph, let $m\geq |V|$ be a power of $2$. Identify $V = [|V|]\subseteq [m]$ so that each vertex $v\in V$ corresponds to a distinct codeword $\textbf{h}_v\in \had_m$. Let \[U =\had_m\times \had_m\subset(\{0,1\}^{2m}, \ell_1)\] We define an injective map $\rho : V\to U$ given by $\rho(v) = (\textbf{h}_v,\textbf{h}_v)$. Moreover, we define $\tau : I_\mc{H}\to \mc{P}(U)$ as follows. For every edge $e = \{x,y,z\}$, choose an arbitrary orientation $\tilde{e} = (x,y,z)$. Then, let $\tau$ be given by: \begin{align*}
        \tau(e,x) &= \cbr{
        \del{{\textbf{h}}_x,{\textbf{h}}_x}, \del{{\textbf{h}}_x,\comp{\textbf{h}}_z}, \del{\comp{\textbf{h}}_y,{\textbf{h}}_x}, \del{\comp{\textbf{h}}_y,\comp{\textbf{h}}_z}, \del{\comp{\textbf{h}}_z,\comp{\textbf{h}}_z}} \\
        \tau(e,y) &= \cbr{\del{\textbf{h}_y,\textbf{h}_y}, \del{\textbf{h}_y,\comp{\textbf{h}}_x}, \del{\comp{\textbf{h}}_z,\textbf{h}_y}, \del{\comp{\textbf{h}}_z,\comp{\textbf{h}}_x}, \del{\comp{\textbf{h}}_x,\comp{\textbf{h}}_x}}\\
        \tau(e,z) &= \cbr{\del{\textbf{h}_z,\textbf{h}_z}, \del{\textbf{h}_z,\comp{\textbf{h}}_y}, \del{\comp{\textbf{h}}_x,\textbf{h}_z}, \del{\comp{\textbf{h}}_x,\comp{\textbf{h}}_y}, \del{\comp{\textbf{h}}_y,\comp{\textbf{h}}_y}}\\
    \end{align*}
    This definition of $\tau$ is illustrated in Figure \ref{fig: L1 tau diagram}.
\end{definition}

\begin{figure}[!h] 
    \centering
    \begin{tikzpicture}
        \filldraw(0,0) node[xshift=-0.6cm,yshift=-0.4cm]{\small$(\textbf{h}_x,\textbf{h}_x)$} circle(2pt);
        \filldraw(2,0) node[xshift=-0.1cm, yshift=-0.4cm]{\small$(\textbf{h}_x,\comp{\textbf{h}}_z)$} circle(1.5pt);
        \filldraw(4,0) node[yshift=-0.4cm] {\small$(\comp{\textbf{h}}_z,\comp{\textbf{h}}_z)$}circle(2pt);
        \filldraw(6,0) node[xshift=0.1cm, yshift=-0.4cm]{\small$(\comp{\textbf{h}}_z,\textbf{h}_y)$}circle(1.5pt);
        \filldraw(8,0) node[xshift=0.7cm,yshift=-0.4cm]{\small$(\textbf{h}_y,\textbf{h}_y)$} circle(2pt);

        \filldraw(1,1.732) node[xshift=-0.8cm,yshift=0.1cm]{\small$(\comp{\textbf{h}}_y,\textbf{h}_x)$} circle(1.5pt);
        \filldraw(3,1.732) node[xshift=0.4cm,yshift=0.4cm]{\footnotesize$(\comp{\textbf{h}}_y,\comp{\textbf{h}}_z)$} circle(1.5pt);
        \filldraw(5,1.732) node[xshift=-0.4cm,yshift=0.4cm] {\footnotesize$(\comp{\textbf{h}}_z,\comp{\textbf{h}}_x)$} circle(1.5pt);
        \filldraw(7,1.732) node[xshift=0.8cm,yshift=0.1cm]{\small$(\textbf{h}_y,\comp{\textbf{h}}_x)$} circle(1.5pt);

        \filldraw(2,3.464) node[xshift=-0.8cm,yshift=0.2cm]{\small$(\comp{\textbf{h}}_y,\comp{\textbf{h}}_y)$} circle(2pt);
        \filldraw(4,3.464) node[yshift=-0.35cm]{\footnotesize$(\comp{\textbf{h}}_x,\comp{\textbf{h}}_y)$}circle(1.5pt);
        \filldraw(6,3.464) node[xshift=0.8cm,yshift=0.2cm]{\small $(\comp{\textbf{h}}_x,\comp{\textbf{h}}_x)$} circle(2pt);

        \filldraw(3,5.196) node[xshift=-0.8cm,yshift=0.3cm]{\small $(\textbf{h}_z,\comp{\textbf{h}}_y)$} circle(1.5pt);
        \filldraw(5,5.196) node[xshift=0.8cm,yshift=0.3cm]{\small $(\comp{\textbf{h}}_x,\textbf{h}_z)$} circle(1.5pt);

        \filldraw(4,6.928)  node[yshift=0.4cm]{\small$(\textbf{h}_z,\textbf{h}_z)$} circle(2pt);

        \draw (0,0) -- (8,0) -- (4,6.928) -- (0,0);
        
        \draw[dashed] (4,0) -- (6,3.464) -- (2,3.464) -- (4,0);

        \filldraw[cyan, opacity=0.3] (0,0) --  node[black, opacity=1, yshift=1.3cm] {\Large $\tau(e,x)$} (4,0) -- (2.25, 3.031) -- (1.75, 3.031) -- (0,0);

        \filldraw[green, opacity=0.3] (4.5,0) -- node[black, opacity=1, xshift=-0.2cm,yshift=1.3cm] {\Large $\tau(e,y)$} (8,0) -- (6,3.464) -- (4.25,0.433)--(4.5,0);

        \filldraw[red, opacity=0.3] (2,3.464) -- node[black, opacity=1, xshift=0.2cm,yshift=1.3cm] {\Large $\tau(e,z)$} (5.5,3.464) -- (5.75,3.897) -- (4,6.928) -- (2,3.464);
    \end{tikzpicture}
    \caption{Illustration of $\tau(e,\cdot)$ for $\tilde{e} = (x,y,z)$.}
    \label{fig: L1 tau diagram}
\end{figure}

To motivate this choice of $\tau$, we introduce the following terminology. First, for a given point $p = (p_1,p_2)\in U$, we refer to $p_1\in \had_m$ and $p_2\in\had_m$ as the first and second \textit{blocks} of $p$, respectively. For $A\subseteq U$, we say a Hadamard codeword $\textbf{h}_v\in \text{Had}_m^+$ \textit{features positively in $A$} if $\textbf{h}_v$ is a block (either the first or the second) of some point in $A$. Similarly, we say $\textbf{h}_v$ \textit{features negatively in $A$} if $\comp{\textbf{h}}_v$ is a block of some point in $A$.

Our choice of $\tau$, then, has the following \textit{exclusivity} property. For any $e=\{x,y,z\}\in E$, no codeword in $\text{Had}_m^+$ features both positively and negatively in $\tau(e,x)$: the codeword $\textbf{h}_x$ features positively but not negatively, while the codewords $\textbf{h}_y$ and $\textbf{h}_z$ feature negatively but not positively. From just this property, we can conclude that $\textsf{diam}(\tau(e,x))\leq m/2 + m/2 = m$, since no two points in $\tau(e,x)$ may contain complementary first blocks or complementary second blocks. Analogous statements hold for $\tau(e,y)$ and $\tau(e,z)$.

\begin{lemma} \label{lem: L1 cloud}
    Let $\mc{H}$ be a $3$-uniform hypergraph. Let $U,\rho,\tau$ be as in Definition \ref{def: L1 cloud}. Then, $(U,\rho,\tau)$ is an efficiently computable $3/2$-cloud system of  $\mc{H}$ with short distance $\beta = m$.
\end{lemma}
\begin{proof}
    It is clear that $\rho$ and $\tau$ can each be evaluated in polynomial time. Moreover, since the Hadamard code can be computed in polynomial time, the set $U$ can also be computed in polynomial time.

    We will first show that $(U,\rho,\tau)$ satisfies the proximity condition in Definition \ref{def: cloud}. Let $v,v'\in V$ be nonadjacent. Then, it suffices to show that for any $p,p'\in P_v\cup P_{v'}$, we have $\norm{p-p'}_1 \leq m$. Writing $p = (p_1,p_2)$ and $p' = (p_1',p_2')$ where $p_1,p_2,p_1',p_2'\in\had_m$, it suffices to show that $\norm{p_1-p_1'}_1, \norm{p_2-p_2'}_1\leq m/2$, or equivalently, that $p_1'\neq \comp{p_1}$ and $p_2'\neq\comp{p_2}$. It is enough, then, to prove that no codeword in $\text{Had}_m^+$ features both positively and negatively in $P_v\cup P_{v'}$.
    
    Observe that we have \[P_v\cup P_{v'} = \del{\bigcup_{\substack{e\in E \\ (e,v)\in I_{\mc{H}}}} \tau(e,v)}\cup \del{\bigcup_{\substack{e'\in E \\ (e',v')\in I_{\mc{H}}}} \tau(e',v')}\] By the exclusivity property of $\tau$, the only codewords featuring positively in $P_v\cup P_{v'}$ are $\textbf{h}_v$ and $\textbf{h}_{v'}$. Moreover, $\textbf{h}_v$ does not feature negatively in $P_v$, as it does not feature negatively in any $\tau(e,v)$. We claim that $\textbf{h}_v$ cannot feature negatively in $P_{v'}$ either. Otherwise, there must be some $e'\in E$ containing $v'$ such that $\textbf{h}_v$ features negatively in $\tau(e',v')$. But then $v\in e'$, contradicting the fact that $v,v'$ are nonadjacent. Therefore, $\textbf{h}_v$ does not feature negatively in $P_v\cup P_{v'}$. The same reasoning applies to $\textbf{h}_{v'}$, allowing us to conclude that $(U,\rho,\tau)$ satisfies the proximity condition. 
    
    Next, we will show that $(U,\rho,\tau)$ satisfies the spread condition in Definition \ref{def: cloud}. By symmetry, it suffices to show that any $3$-clustering of $P_{\{x,y,z\}}$ in which $(\textbf{h}_x,\textbf{h}_x)$ and $(\textbf{h}_y,\textbf{h}_y)$ are clustered together necessarily has diameter at least $3m/2$. Let $\{C_1,C_2,C_3\}$ be a $3$-clustering of $P_{\{x,y,z\}}$ and suppose for contradiction that $(\textbf{h}_x,\textbf{h}_x),(\textbf{h}_y,\textbf{h}_y)\in C_1$, where $\textsf{diam}(C_1) < 3m/2$. Then, no point in $C_1$ can contain the codewords $\comp{\textbf{h}}_x$ or $\comp{\textbf{h}}_y$, and thus \[C_1\subseteq \{(\textbf{h}_x,\textbf{h}_x), (\textbf{h}_y,\textbf{h}_y), (\textbf{h}_z,\textbf{h}_z), (\textbf{h}_x, \comp{\textbf{h}}_z), (\comp{\textbf{h}}_z,\comp{\textbf{h}}_z), (\comp{\textbf{h}}_z, \textbf{h}_y)\}.\] We divide into two cases. In the first case, $(\textbf{h}_z,\textbf{h}_z)\in C_1$, and so, to ensure $\textsf{diam}(C_1) < 3m/2$, it must be that \[C_1=\{(\textbf{h}_x,\textbf{h}_x), (\textbf{h}_y,\textbf{h}_y), (\textbf{h}_z,\textbf{h}_z)\}.\] Then, $(\comp{\textbf{h}}_x,\comp{\textbf{h}}_x),(\comp{\textbf{h}}_y,\comp{\textbf{h}}_y),(\comp{\textbf{h}}_z,\comp{\textbf{h}}_z)\in C_2\cup C_3$. Two of these three points must be in the same cluster; without loss of generality, say $(\comp{\textbf{h}}_x,\comp{\textbf{h}}_x),(\comp{\textbf{h}}_y,\comp{\textbf{h}}_y)\in C_2$. Then, in order to ensure that $\textsf{diam}(C_2) < 3m/2$, we must have $(\comp{\textbf{h}}_y, \textbf{h}_x), (\textbf{h}_y,\comp{\textbf{h}}_x)\in C_3$. But then $\textsf{diam}(C_3) \geq \norm{(\comp{\textbf{h}}_y, \textbf{h}_x) -(\textbf{h}_y,\comp{\textbf{h}}_x)}_1 =2m$.

    In the second case, $(\textbf{h}_z,\textbf{h}_z)\notin C_1$, so we have \[C_1\subseteq \{(\textbf{h}_x,\textbf{h}_x), (\textbf{h}_y,\textbf{h}_y), (\textbf{h}_x, \comp{\textbf{h}}_z), (\comp{\textbf{h}}_z,\comp{\textbf{h}}_z), (\comp{\textbf{h}}_z, \textbf{h}_y)\}.\]
    We will now show that either $C_2$ or $C_3$ necessarily has diameter $\geq 3m/2$. Without loss of generality, $(\textbf{h}_z,\textbf{h}_z)\in C_2$, which in turn implies $(\comp{\textbf{h}}_z,\comp{\textbf{h}}_x), (\comp{\textbf{h}}_y,\comp{\textbf{h}}_z)\in C_3$. But then $(\comp{\textbf{h}}_y, \textbf{h}_x), (\textbf{h}_y, \comp{\textbf{h}}_x)\in C_2$, a contradiction since $\textsf{diam}(C_2)\geq \norm{(\comp{\textbf{h}}_y, \textbf{h}_x)-(\textbf{h}_y, \comp{\textbf{h}}_x)}_1 = 2m$. We conclude that any $3$-clustering of $P_{\{x,y,z\}}$ with diameter $< 3m/2$ has $(\textbf{h}_x,\textbf{h}_x),(\textbf{h}_y,\textbf{h}_y),(\textbf{h}_z,\textbf{h}_z)$ in different clusters, as desired.
\end{proof}

\noindent This $3/2$-cloud system combined with Theorem \ref{thm: main reduction} gives us our first main result.

\begin{theorem}[\kdiam in the $\ell_1$-metric] 
\label{thm: L1 3/2}
    Let $k \ge 3$ be constant. Given $m\in \N$ and a pointset $P\subset (\R^m,\ell_1)$, it is \np-hard to distinguish between the following two cases: \begin{itemize}
        \item \textbf{\textup{Completeness:}} There exists a $k$-clustering of $P$ with diameter at most $1$.
        \item \textbf{\textup{Soundness:}} Any $k$-clustering of $P$ has diameter at least $3/2$.
    \end{itemize}
\end{theorem}
\begin{proof}
    By Lemma \ref{lem: L1 cloud}, for any $3$-uniform hypergraph $\mc{H} = (V,E)$, there is an efficiently computable $3/2$-cloud system $(U,\rho,\tau)$ of $\mc{H}$ where $U\subset (\R^m, \ell_1)$ and $m$ is a 
    power of $2$. By scaling all points, we can assume that the short distance of the system is $1$. Applying Theorem \ref{thm: main reduction} and hardness of panchromatic $k$-coloring on $k$-uniform hypergraphs, it is \np-hard to distinguish between the two cases for $k=3$. The statement follows by Remark \ref{rem: larger k is harder}.
\end{proof}

\noindent Theorem \ref{thm: L1 3/2} also immediately implies an inapproximability result in the $\ell_2$-metric.

\begin{corollary}[\kdiam in the $\ell_2$-metric] \label{cor: L2 sqrt(3/2)-hard}
    Let $k \ge 3$ be constant. Given $m\in \N$ and a pointset $P\subset (\R^m,\ell_2)$, it is \np-hard to distinguish between the following two cases: \begin{itemize}
        \item \textbf{\textup{Completeness:}} There exists a $k$-clustering of $P$ with diameter at most $1$.
        \item \textbf{\textup{Soundness:}} Any $k$-clustering of $P$ has diameter at least $\sqrt{3/2}$.
    \end{itemize}
\end{corollary}

\begin{proof} 
    Since the distance between any two points in $\{0,1\}^m$ 
    in the $\ell_1$-metric is the square of the distance between these points in the $\ell_2$-metric, any $r$-cloud system $(U,\rho,\tau)$ where $U\subset(\{0,1\}^m, \ell_1)$ is a $\sqrt{r}$-cloud system when $U$ is treated as a subset of $(\{0,1\}^m, \ell_2)$. Applying Theorem \ref{thm: main reduction} as above, we obtain the desired result.
\end{proof}

\section{Hardness of Approximation in the $\ell_2$-metric}\label{sec: L2}

In this section, we improve upon Corollary \ref{cor: L2 sqrt(3/2)-hard} by treating the $\ell_2$-metric directly. Indeed, we demonstrate the existence of a $1.304$-cloud system of any 3-uniform hypergraph in the $\ell_2$-metric, hence proving hardness of approximation \kdiam within a factor of $1.304$ in this setting.

We first establish the following notation for integer partitions.
    \begin{definition} \label{def: partition}
        For $\kappa,t\in\N$, let $\text{part}(\kappa, t) = \{(\alpha_1,\ldots,\alpha_t)\in(\N\cup\{0\})^t : \sum_{i=1}^t \alpha_i= \kappa\}.$
    \end{definition}

Now, we define a net of points on particular regions of the surface of a unit sphere.
\begin{definition} \label{def: L2 cloud}
    Let $\mc{H} = (V,E)$ be a 3-uniform hypergraph, and identify $V=[m]$ so that each vertex $v\in V$ corresponds to a distinct standard basis vector $\textbf{e}_v\in\R^m$. For $\kappa\in \N$, define

    \[U_\kappa = \cbr{\frac{\sum_{v \in [m]} \alpha_v \textbf{e}_v}{|\sum_{v \in [m]} \alpha_v \textbf{e}_v|} : (|\alpha_1|,\ldots,|\alpha_m|) \in \text{part}(\kappa, m)}. \] 
    Geometrically, $U_\kappa$ is a net of rational points on the surface of the unit sphere in $\R^m$ centered at $\textbf{0}$. We define an injective map $\rho : V\to U_\kappa$ given by $\rho(v) = \textbf{e}_v$. Next, we define $\tau_\kappa : I_\mc{H}\to \mc{P}(U)$ as follows. For every edge $e = \{x,y,z\}$, choose an arbitrary orientation $\tilde{e} = (x,y,z)$. Then, let $\tau_\kappa$ be given by: \begin{align*}
        \tau_\kappa(e,x) &= \cbr{\frac{\alpha_x \textbf{e}_x - \alpha_y\textbf{e}_y -\alpha_z\textbf{e}_z}{\abs{\alpha_x \textbf{e}_x - \alpha_y\textbf{e}_y -\alpha_z\textbf{e}_z}} :\; (\alpha_x,\alpha_y,\alpha_z)\in\text{part}(\kappa, 3)} \setminus \cbr{-\textbf{e}_y}\\
        \tau_\kappa(e,y) &= \cbr{\frac{-\alpha_x \textbf{e}_x +\alpha_y\textbf{e}_y -\alpha_z\textbf{e}_z}{\abs{-\alpha_x \textbf{e}_x +\alpha_y\textbf{e}_y -\alpha_z\textbf{e}_z}} :\;(\alpha_x,\alpha_y,\alpha_z)\in\text{part}(\kappa, 3)} \setminus \cbr{-\textbf{e}_z}\\
        \tau_\kappa(e,z) &= \cbr{\frac{-\alpha_x \textbf{e}_x - \alpha_y\textbf{e}_y +\alpha_z\textbf{e}_z}{|{-\alpha_x} \textbf{e}_x - \alpha_y\textbf{e}_y +\alpha_z\textbf{e}_z|} : (\alpha_x,\alpha_y,\alpha_z)\in\text{part}(\kappa, 3)} \setminus \cbr{-\textbf{e}_x}\\
    \end{align*}
    This definition of $\tau_\kappa$ is illustrated in Figure \ref{fig: L2 tau diagram}.
\end{definition}

\begin{figure}[!h]
    \centering
    
    \begin{tikzpicture}[scale=0.75]

        \shade[ball color = gray!40, opacity = 0.2] (0,0) circle (4);
        \draw (0,0) circle (4);
        \draw (-4,0) arc (180:360:4 and 0.6);
        \draw[dashed] (-4,0) arc (180:360:4 and -0.6);
        \draw (0,4) arc (90:270:0.5 and 4);

        \fill[cyan, opacity = 0.3] (0,4) arc (90:188.75:0.5 and 4) -- (-4,0) -- (0,4) -- cycle;
        \fill[cyan, opacity = 0.3] (-4,0) arc (180:98:4 and -0.6);
        \fill[cyan, opacity = 0.3] (-4,0) arc (180:90:4 and 4);
        \fill[green, opacity=0.3] (0,-4) arc (270:360:4 and 4);
        \fill[green, opacity=0.3] (4,0) arc (180:277:-4 and 0.6) -- (0,-4) -- (4,0) -- cycle;
        \fill[green, opacity=0.3] (0,-4) arc (270:360:-0.5 and 3.5) -- (0,-4) -- cycle;
        \fill[red, opacity=0.2] (4,0) arc (0:90:4 and 4) -- (4,0) -- cycle;
        \draw[dashed] (0,4) arc (90:171.5:-0.5 and 4);
        \fill[red, opacity=0.2] (0,4) arc (90:171.5:-0.5 and 4) arc (90:178.5:-3.5 and 0.6) -- (0,4) -- cycle;
    
        \filldraw(-3,-0.4) circle(1.5pt) node[xshift=0.6cm,yshift=1.4cm]{\Large $\tau_2(e,x)$};
        \filldraw(3,0.4) circle(1.5pt);
        \filldraw(3,-0.4) circle(1.5pt);
        \filldraw(-2.828,2.828) circle(1.5pt);
        \filldraw(2.828,-2.828) circle(1.5pt) node[xshift=-1cm,yshift=0.6cm]{\Large$\tau_2(e,y)$};
        \filldraw(2.828,2.828) circle(1.5pt);
        \filldraw(0.375,2.6) circle(1.5pt) node[xshift=1.1cm,yshift=-0.4cm]{\Large$\tau_2(e,z)$};
        \filldraw(-0.375,-2.6) circle(1.5pt);
        \filldraw(-0.375,2.6) circle(1.5pt);
        
        \filldraw[fill=green](4,0) node[xshift=0.5cm]{\small$-\textbf{e}_x$} circle(2pt);
        \filldraw[fill=red, opacity=1](0,4) node[yshift=0.3cm]{\small$-\textbf{e}_y$} circle(2pt);
        \filldraw[fill=cyan](-0.49,-0.6) node[xshift=-0.4cm, yshift=-0.3cm]{\small$-\textbf{e}_z$} circle(2pt);
        \filldraw(-4,0) node[xshift=-0.4cm]{\small$\textbf{e}_x$} circle(2pt);
        \filldraw(0,-4) node[yshift=-0.4cm]{\small$\textbf{e}_y$} circle(2pt);
        \filldraw(0.49,0.6) node[xshift=-0.1cm, yshift=-0.3cm]{\small$\textbf{e}_z$} circle(2pt); 
        
    \end{tikzpicture}
    \caption{Illustration of $\tau_\kappa(e, \cdot)$ for $\tilde{e} = (x,y,z)$ and $\kappa=2$. Here, each set $\tau_2(e,\cdot)$ contains $5$ points on the surface of a unit sphere. Note the similarity between $\tau_2$ and the construction in the $\ell_1$-metric, as shown in Figure \ref{fig: L1 tau diagram}.} 
    \label{fig: L2 tau diagram}
\end{figure}

This choice of $\tau_\kappa$ is motivated similarly to the one in Section \ref{sec: L1}. For $A\subseteq U$, we say that a coordinate $v\in [m]$ \textit{features positively in $A$} if $p_v > 0$ for some $p\in A$. Similarly, $v$ \textit{features negatively in $A$} if $p_v < 0$ for some $p\in A$.

Our choice of $\tau_\kappa$, as in Section \ref{sec: L1}, has the following \textit{exclusivity} property. For any $e=\{x,y,z\}\in E$, no coordinate in $[m]$ features both positively and negatively in $\tau_\kappa(e,x)$: the coordinate $x$ features positively but not negatively, while the coordinates $y$ and $z$ feature negatively but not positively. Analogous statements hold for $\tau_\kappa(e,y)$ and $\tau_\kappa(e,z)$.


\begin{lemma} \label{lem: L2 cloud}
    Let $U_\kappa,\rho,\tau_\kappa$ be as in Definition \ref{def: L2 cloud}. Then, for a suitable choice of constant $\kappa$, $(U_\kappa,\rho,\tau_\kappa)$ is an efficiently computable $1.304$-cloud system with short distance $\beta = \sqrt{2}$.
\end{lemma}
\begin{proof}
    For constant $\kappa\in\N$, it is clear that $\rho$, $\tau_\kappa$, and $U_\kappa$ can each be computed in polynomial time. 

    We will first show that $(U_\kappa,\rho,\tau_\kappa)$ satisfies the proximity condition in Definition \ref{def: cloud} for any $\kappa\in\N$. Let $v,v'\in V$ be nonadjacent. Then, it suffices to show that for any $p,p'\in P_v\cup P_{v'}$, we have $\norm{p-p'}_2 \leq \sqrt{2}$. Since $p$ and $p'$ are both unit vectors, this is equivalent to proving $\ip{p}{p'}\geq 0$, as \[\norm{p-p'}_2 = \sqrt{\ip{p-p'}{p-p'}} = \sqrt{2} - \sqrt{2}\cdot\ip{p}{p'}.\] It is enough, then, to show that no coordinate features both positively and negatively in $P_v\cup P_{v'}$, as this implies that $p_i{p_i}'\geq 0$ for each $i\in [m]$ and thus $\ip{p}{p'}\geq 0$.
        
    Observe that we have \[P_v\cup P_{v'} = \del{\bigcup_{\substack{e\in E \\ (e,v)\in I_{\mc{H}}}} \tau_\kappa(e,v)}\cup \del{\bigcup_{\substack{e'\in E \\ (e',v')\in I_{\mc{H}}}} \tau_\kappa(e',v')}\] By the exclusivity property of $\tau_\kappa$, the only coordinates featuring positively in $P_v\cup P_{v'}$ are $v$ and $v'$. Moreover, $v$ does not feature negatively in $P_v$, as it does not feature negatively in any $\tau_\kappa(e,v)$. We claim that $v$ cannot feature negatively in $P_{v'}$ either. Otherwise, there must be some $e'\in E$ containing $v'$ such that $v$ features negatively in $\tau_\kappa(e',v')$. But then $v\in e'$, contradicting the fact that $v,v'$ are nonadjacent. Therefore, $v$ does not feature negatively in $P_v\cup P_{v'}$. The same reasoning applies to $v'$, allowing us to conclude that $(U_\kappa,\rho,\tau_\kappa)$ satisfies the proximity condition for any $\kappa \in \N$. 

    We claim that $(U_\kappa,\rho,\tau_\kappa)$ satisfies the spread condition in Definition \ref{def: cloud} for $\kappa = 12$. This can be verified using a computer by confirming that all $3$-clusterings of $P_{\{x,y,z\}}\subset U_{12}$ in which $\textbf{e}_x$ and $\textbf{e}_y$ are clustered together have diameter at least $1.304 \cdot \sqrt{2}$. The code used to prove the second part of Lemma \ref{lem: L2 cloud} can be found at \href{https://github.com/cea4608937/hardness-of-diameter}{https://github.com/cea4608937/hardness-of-diameter}.
\end{proof}

Applying Theorem \ref{thm: main reduction} to this cloud system gives us our main result in the $\ell_2$-metric.

\begin{theorem}[\kdiam in the $\ell_2$-metric] \label{thm: L2 1.304}
    Let $k \ge 3$ be constant. Given $m\in \N$ and a pointset $P\subset (\R^m,\ell_2)$, it is \np-hard to distinguish between the following two cases: \begin{itemize}
        \item \textbf{\textup{Completeness:}} There exists a $k$-clustering of $P$ with diameter at most $1$.
        \item \textbf{\textup{Soundness:}} Any $k$-clustering of $P$ has diameter at least $1.304$.
    \end{itemize}
\end{theorem}
\begin{proof}
    By Lemma \ref{lem: L2 cloud}, for any $3$-uniform hypergraph $\mc{H} = (V,E)$, there is an efficiently computable $3/2$-cloud system $(U,\rho,\tau)$ of $\mc{H}$ where $U\subset (\R^m, \ell_1)$ for $m=|V|$. By scaling all points, we can assume that the short distance of the system is $1$. Applying Theorem \ref{thm: main reduction} and hardness of panchromatic $k$-coloring on $k$-uniform hypergraphs, it is \np-hard to distinguish between the two cases for $k=3$. The statement follows by Remark \ref{rem: larger k is harder}.
\end{proof}

\begin{remark}
    Fix an edge $e = \{x,y,z\}$, and consider the set $P_{\{x,y,z\}}\subset U_\kappa$. Note that as $\kappa$ increases, $P_{\{x,y,z\}}$ forms denser nets, allowing us to potentially satisfy the spread condition for larger values of $r$, thereby obtain larger hardness factors than $1.304$. However, there is not much room for improvement with this technique. Namely, the hardness factor we obtain will never surpass $\sqrt{1+\sqrt{2}/2} \approx 1.307$ because for this particular construction, the spread property does not hold when $r\geq \sqrt{1+\sqrt{2}/2}$. In particular, we define the following $3$-clustering of $P_{\cbr{x,y,z}}\subset U_\kappa$. Denote the clusters $C_x, C_y, C_z$ and for $p\in P_{\cbr{x,y,z}}$, let $p\in C_{v'}$ if \[v' = \argmin_{v\in\cbr{x,y,z}} p_v,\] deciding ties arbitrarily with a single exception: we force that $\textbf{e}_x, \textbf{e}_y\in C_z$, which is possible since $(\textbf{e}_x)_z = (\textbf{e}_y)_z = 0$. It can be checked that for any $\kappa\in\N$, the clustering $\{C_x,C_y,C_z\}$ has diameter at most $\sqrt{2+\sqrt{2}} = \sqrt{2} \cdot \sqrt{1+\sqrt{2}/2}$. Since $\rho(x)$ and $\rho(y)$ clustered together, this confirms that the spread property does not hold. 
    \end{remark}

\section{Barriers to Proving Hardness of \kdiam} \label{sec: barriers}

In this section, we present two barriers to showing improved hardness of approximation for the \kdiam problem.

\subsection{Intermediate Distance Barrier}
In this section, we show a barrier to our methods of proving hardness of \kdiam in the $\ell_1$-metric. We show that proving hardness of approximation to a factor $r > 5/3$ requires constructing a pointset without large gaps in the set of pairwise distances.

\begin{definition}
    Let $P\subset (X,\textsf{dist})$ be a set of points. For $a,b\in\R$, we say that $P$ has an $(a,b)$-gap if there are no $p,p'\in P$ such that $\textsf{dist}(p,p')\in (a,b)$.
\end{definition}

\begin{definition}
    Let $G$ and $H$ be graphs. We say that $G$ is $H$-free if no induced subgraph of $G$ is isomorphic to $H$.
\end{definition}

It turns out that there is a connection between \kdiam on pointsets with gaps and coloring $H$-free graphs.

\begin{theorem} \label{thm: L1 barrier}
    Let $H$ be a graph such that $H$-free $k$-coloring is in \p and H is not $r$-embeddable. Then, there is a polynomial time algorithm that takes as input pointsets $P$ with a $(\beta, r\beta)$-gap and determines if the optimal $3$-clustering of $P$ has diameter at most $\beta$.
\end{theorem}

\begin{proof}
    Let $\mathcal{A}$ be a polynomial time algorithm which, given an $H$-free graph, determines if a $k$-coloring exists. Given a pointset $P$ with an $(\beta,r\beta)$-gap, construct a graph $G = (V,E)$ where $V = P$ and $(p,p')\in E$ if $\textsf{dist}(p,p') > \beta$. 
    We observe that the graph $G$ must be $H$-free, or else the points corresponding to $H \subset G$ would give an $r$-embedding of $H$. By running $\mathcal{A}$ on $G$, we determine if a $k$-coloring of $G$ exists.  A coloring of $G$ directly corresponds to a $k$-clustering of $P$ with diameter at most $\beta$, so this algorithm determines if $P$ has a $k$-clustering with diameter at most $\beta$, as desired.
\end{proof}

In order to apply Theorem \ref{thm: L1 barrier}, we must identify a graph $H$ for which $H$-free $k$-coloring can be solved in polynomial time.

\begin{lemma}\cite{Bonomo2018}\label{lem: P7 free easy}
    Let $P_7$ denote the path graph on $7$ vertices. There exists a polynomial time algorithm to $3$-color $P_7$-free graphs.
\end{lemma}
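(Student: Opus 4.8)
The plan is to reconstruct the algorithm of Bonomo, Chudnovsky, Maceli, Schaudt, Stein, and Zhong \cite{Bonomo2018} by following the template that already works for $3$-coloring $P_5$- and $P_6$-free graphs: reduce $3$-colorability of the input $P_7$-free graph $G$ to a list-constrained problem that, after polynomially much branching, collapses to an instance of $2$-SAT. First I would pass to \emph{list-$3$-coloring}, where each vertex $v$ carries a list $L(v)\subseteq\{1,2,3\}$ of permitted colors; ordinary $3$-coloring is the case $L(v)=\{1,2,3\}$ for all $v$. Throughout, keep the instance ``propagated'': whenever some vertex has a singleton list, delete that color from every neighbor's list, and if a list ever becomes empty, report \emph{no}. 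These rules run in polynomial time and never change the answer, so the entire difficulty is to eliminate the vertices that still carry a size-$3$ list.

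The next step is where $P_7$-freeness enters. Working inside a single connected component, I would invoke the structural fact that a connected $P_7$-free graph has a dominating set of bounded ``shape'' --- the $P_7$ analogue of the Bacs\'{o}--Tuza theorem, which guarantees a dominating clique or a dominating $P_3$ in any connected $P_5$-free graph. One then enumerates the possible colorings of this dominating configuration. Since a clique of a $3$-colorable graph has at most three vertices, and the remainder of the configuration is controlled by boundedly many vertex ``types'', each choice is made among a bounded number of options, so the number of branches produced is polynomial.

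In each branch the dominating configuration is fully colored, so every remaining vertex has a colored neighbor, and propagation shrinks its list to size at most $2$. A list-$3$-coloring instance in which every list has size $\le 2$ is solvable in polynomial time by the classical reduction to $2$-SAT (one Boolean variable per vertex recording which of its two colors is chosen, one implication clause per edge forbidding a monochromatic edge), and a satisfying assignment unwinds to an explicit proper coloring. Hence $G$ is $3$-colorable if and only if some branch produces a satisfiable $2$-SAT instance; with polynomially many branches, each resolved in polynomial time, the overall algorithm is polynomial.

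I expect the main obstacle to be the second step: for connected $P_7$-free graphs one cannot bound the \emph{size} of a dominating set, only its shape, so making the branching polynomial requires a delicate case analysis showing that it suffices to guess the colors of boundedly many representative vertices of the dominating configuration --- together with a bounded amount of additional branching, in a few configurations, to split anticomplete pairs --- after which propagation genuinely forces all remaining lists down to size at most $2$. Performing this case analysis and verifying that no configuration demands more than polynomially much guessing is the technical core of \cite{Bonomo2018}; I would cite it for the complete details rather than reproduce every case.
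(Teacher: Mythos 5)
The paper treats this as a black-box citation to Bonomo, Chudnovsky, Maceli, Schaudt, Stein, and Zhong and gives no proof of its own, so there is no in-paper argument for your sketch to match; what you have written is a reconstruction of the \emph{cited} work. Your high-level template is the right one --- pass to list-$3$-coloring, propagate singleton lists, branch polynomially many times until every surviving list has size at most two, then solve by $2$-SAT --- and that is indeed the framework of the cited algorithm. Where the sketch goes wrong is the middle step. There is no Bacs\'{o}--Tuza-style theorem giving a connected $P_7$-free graph a dominating configuration of bounded ``shape'' that you can color by enumeration. The best available analogue (Camby--Schaudt) gives a connected dominating set inducing a $P_5$-free graph or a $P_5$, and a $P_5$-free dominating set has no size bound, so ``enumerate its $3$-colorings'' is not a polynomial step. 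This is precisely the obstruction that makes the $P_7$ case qualitatively harder than $P_5$ and $P_6$, and Bonomo et al.\ do not proceed through a dominating-set theorem at all: their algorithm grows a carefully chosen connected subgraph, classifies the remaining vertices by their attachment pattern to it, and performs several layers of structural reduction and targeted branching before the instance collapses to lists of size at most two. You correctly flag the second step as ``the main obstacle'' and defer to the citation for it, which is the honest and appropriate move given that the paper itself only cites; but as written the dominating-configuration claim is not a placeholder waiting to be instantiated, it is a statement that is not available in the form you need, so the sketch should be read as a pointer to the right theme rather than a faithful outline of the proof's actual structure.
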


\begin{corollary} \label{cor: L0 barrier}
    For any $r > 5/3$, there is a polynomial time algorithm that takes as input pointsets $P$ in the $\ell_1$-metric with a $(\beta, r\beta)$-gap and determines if the optimal $3$-clustering of $P$ has diameter at most $\beta$.
\end{corollary}
\begin{proof}
    Using the method in Appendix~\ref{sec: appendix A}, we verify that the graph $P_7$ is not $r$-embeddable in the $\ell_1$-metric for any $r > 5/3$. Applying Lemma \ref{lem: P7 free easy} and Theorem \ref{thm: L1 barrier} gives the result.
\end{proof}

\begin{remark}
    Corollary \ref{cor: L0 barrier} can be seen as a barrier to various methods of proving hardness of approximation results in the $\ell_1$-metric. First, consider the initial $r$-embedding approach for proving hardness of \kdiam outlined in Section \ref{sec: introduction}. In this approach, we construct an $r$-embedding $\varphi$ (with short distance $\beta$) of a hard-to-color graph $G$, arguing that it is \np-hard to determine if the optimal $3$-clustering of $\varphi(G)$ has diameter at most $\beta$. Importantly, by definition of $r$-embedding, the pointset $\varphi(G)$ has a $(\beta, r\beta)$-gap.
    
    Next, recall the $3/2$-cloud system constructed in Section \ref{sec: L1} and the associated pointset $P$. By the unscaled version of Theorem \ref{thm: L1 3/2}, it is \np-hard to determine if the optimal $3$-clustering of $P$ has diameter at most $m$. Moreover, every pairwise distance in $\had_m\times \had_m$ lies in $\{0,m/2,m,3m/2,2m\}$, implying that $P$ has an $(m, 3m/2)$-gap.

    Corollary \ref{cor: L0 barrier} implies that any pointset on which \kdiam cannot be approximated to a $(5/3+\varepsilon)$ factor must contain intermediate distances, which is not the case in the above two approaches. Note that the approach in Section \ref{sec: L2} does have such intermediate distances, so forming nets of points gives one possible method of circumventing this barrier.
\end{remark}

\subsection{Large Odd Girth Barrier}
In this section, we show a better than $\sqrt{2}$-approximation algorithm for \tdiam, restricted to pointsets that are not contained within any ball of diameter $\Delta\cdot \sqrt{2}$, where $\Delta$ denotes the optimal $3$-clustering diameter. We then give a graph theoretic implication of this restriction related to the existence of short odd cycles. 

We start by formally describing the $(\sqrt{2}+\varepsilon)$-approximation algorithm for \kdiam mentioned in Section \ref{sec: introduction}. The natural setting of this algorithm is the closely related $k$-center clustering problem, which we define below.

\begin{definition}[$k$-center clustering problem]
    Let $(X, \textsf{dist})$ be a metric space and $k \in \N$. Given as input a finite set $P \subset X$, find a $k$-clustering $\{A_1,\ldots,A_k\}\subset P$ and a collection of ``centers'' $\{a_1,\ldots,a_k\}$ that minimizes
    \[ \max_{i\in [k],\, x\in A_i} \textsf{dist}(x,a_i).\]
\end{definition}

\begin{theorem}[follows from~\cite{badoiu2002approximate}]\label{alg:const_sqrt_approx}
    Let $\varepsilon>0$ and $k$ a constant. In Euclidean space, $(\sqrt{2}+\varepsilon)$-approximate $k$-clustering is in \p. 
\end{theorem}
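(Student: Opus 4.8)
The plan is to bootstrap the $(1+\varepsilon)$-approximation algorithm for constant-$k$ \kcen of \cite{badoiu2002approximate} using the tight relationship, special to Euclidean space, between the diameter of a point set and the radius of its minimum enclosing ball. I will use two elementary facts. First, by the triangle inequality, if a set $A\subseteq\R^q$ is contained in a ball of radius $\rho$, then $\operatorname{diam}(A)\le 2\rho$. Second, by Jung's theorem, any set $A\subseteq\R^q$ of diameter $\Delta$ is contained in a closed ball of radius at most $\Delta\sqrt{q/(2(q+1))}\le \Delta/\sqrt 2$; we only need the existence of such a ball, not that we can compute it.

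The first step is to compare the two optima on the same input $P$. Let $\Delta^{\star}$ be the optimal $k$-clustering diameter of $P$, witnessed by a partition $\{A_1^{\star},\dots,A_k^{\star}\}$. By Jung's theorem each $A_i^{\star}$ lies in a ball of radius at most $\Delta^{\star}/\sqrt 2$; taking the centers of these $k$ balls as centers shows that the optimal \kcen cost $r^{\star}$ of $P$ satisfies $r^{\star}\le \Delta^{\star}/\sqrt 2$. The second step is to run the algorithm of \cite{badoiu2002approximate} with accuracy parameter $\varepsilon':=\varepsilon/\sqrt 2$, obtaining $k$ centers of cost at most $(1+\varepsilon')r^{\star}$; assigning each point of $P$ to a nearest returned center yields a $k$-clustering $\{A_1,\dots,A_k\}$ in which every $A_i$ lies in a ball of radius at most $(1+\varepsilon')r^{\star}$. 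Applying the triangle-inequality direction, $\operatorname{diam}(A_i)\le 2(1+\varepsilon')r^{\star}\le 2(1+\varepsilon')\Delta^{\star}/\sqrt 2=\sqrt 2(1+\varepsilon')\Delta^{\star}=(\sqrt 2+\varepsilon)\Delta^{\star}$. Hence the output is a $(\sqrt 2+\varepsilon)$-approximate $k$-clustering, and since $k$ is constant the algorithm of \cite{badoiu2002approximate}, together with the nearest-center postprocessing, runs in polynomial time in $n$ and $q$.

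The only point needing care is that the cited \kcen routine is invoked in a legitimate form: that for constant $k$ it runs in polynomial time and produces (or can be made to produce) a clustering meeting the stated radius bound. This causes no difficulty — we use only the cost guarantee, the nearest-center assignment can only decrease each cluster's enclosing radius, and the two ball/diameter inequalities above hold for arbitrary point sets irrespective of whether the algorithm's centers are input points or Steiner points. It is worth noting explicitly that the algorithm never needs to know $\Delta^{\star}$: the approximation guarantee is obtained entirely through the chain of inequalities $\operatorname{diam}(A_i)\le 2(1+\varepsilon')r^{\star}\le\sqrt 2(1+\varepsilon')\Delta^{\star}$, so no guessing or binary search over the optimal value is required. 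I expect the main (very mild) obstacle to be simply stating the guarantee of \cite{badoiu2002approximate} in the precise form used here; everything else is a short computation.
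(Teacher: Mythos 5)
Your proposal is correct and follows essentially the same route as the paper: invoke the $(1+\varepsilon)$-approximation for Euclidean $k$-center from \cite{badoiu2002approximate}, bound the optimal $k$-center radius by $\Delta^{\star}/\sqrt{2}$ via Jung's theorem, and convert back to a diameter bound by the factor-of-two triangle-inequality step. The only difference is cosmetic: you explicitly set $\varepsilon' = \varepsilon/\sqrt{2}$ to land exactly on $\sqrt{2}+\varepsilon$, whereas the paper leaves that rescaling implicit.
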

\begin{proof}
    From~\cite{badoiu2002approximate}, $(1+\varepsilon)$-approximation of Euclidean $k$-center is in \p. By Jung's theorem~\cite{danzer1963helly}, any set of points with diameter $\Delta$ is contained in a closed ball with radius $r$ satisfying
    \[ r \leq \Delta \sqrt{\frac{n}{2(n+1)}} \leq \frac{\Delta}{\sqrt{2}} \]
    Let $\Delta$ be the optimal \kdiam of the given set of points.
    Applying Jung's theorem, the optimal clustering for the $k$-center objective will have diameter
    \[ \Delta' = 2r \leq \Delta\sqrt{2}\]
    Therefore, $(1+\varepsilon)$-approximation of $k$-center immediately gives $(\sqrt{2}+\varepsilon)$-approximation of \kdiam.
\end{proof}

In the case of $k=3$, we can modify this algorithm to a better approximation for certain pointsets.

\begin{theorem} \label{thm: in a sphere barrier}
    Let $\varepsilon>0$ be a sufficiently small constant. Then, there exists a polynomial time $(\sqrt{2}-\varepsilon)$-approximation algorithm for \tdiam, when restricted to pointsets $P$ not contained in a bounding sphere with diameter $\Delta\cdot\left(\sqrt{2}+50\varepsilon^{1/8}\right)$, where $\Delta$ is the optimal $3$-diameter of $P$.
\end{theorem}
\noindent The proof of Theorem \ref{thm: in a sphere barrier} can be found in Appendix \ref{sec: appendix B}. Importantly, Theorem \ref{thm: in a sphere barrier} has implications for the graph $\Gamma_{k,r}$ introduced in Definition \ref{def: gamma}.

\begin{definition}
    Given a graph $G$, the odd girth of $G$ is the length of the smallest odd cycle. If there are no odd cycles, we say that the odd girth is infinite.
\end{definition}

\begin{theorem} \label{thm: odd girth barrier}
For any $g\in\N$, there exists an $\varepsilon>0$ such that $(\sqrt{2}-\varepsilon)$-approximate \tdiam is in \p, when restricted to pointsets $P$ such that $\Gamma_{k,\sqrt{2}-\varepsilon}(P)$ has odd girth at most $g$. 
\end{theorem}
\begin{proof}
    By scaling, we may assume without loss of generality that the optimal $3$-clustering diameter of $P$ is 1. We will prove that for small enough $\varepsilon$, there is no pointset $P$ contained in a ball of diameter $\sqrt{2}+\varepsilon$ such that the odd girth of $\Gamma_{k,\sqrt{2}-\varepsilon}(P)$ is at most $g$. Then, the statement follows from Theorem \ref{thm: in a sphere barrier}.

    So, suppose for contradiction that $P$ is a pointset contained in a ball of diameter $\sqrt{2}+\varepsilon$ and the odd girth of $\Gamma_{k,\sqrt{2}-\varepsilon}(P)$ is at most $g$. We can assume that this ball is centered at the origin. Since the optimal diameter is $1$, all edges in $\Gamma_{k,\sqrt{2}-\varepsilon}(P)$ correspond to distances of at least $\sqrt{2}-\varepsilon$. Let $u,u' \in \Gamma_{k,\sqrt{2}-\varepsilon}(P)$ be arbitrary adjacent vertices and let $p,p'\in P$ be their corresponding points. Note that $\norm{p-p'} \ge \sqrt{2}-\varepsilon$. Since both $p$ and $p'$ are contained in a sphere of diameter $\sqrt{2}+\varepsilon$ centered at the origin, we have that $\norm{p+p'}_2 \le \frac{1}{2g}$ for small enough $\varepsilon$. Thus, the distance between any two points in $P$ whose corresponding vertices share a neighbor must be at most $2 \cdot \frac{1}{2g} = \frac{1}{g}$ by the triangle inequality.

    By assumption, there must be an odd cycle $C$ in $\Gamma_{k,\sqrt{2}-\varepsilon}(P)$ of consisting of at most $g$ vertices. Let $v,v'\in C$ be adjacent vertices and $q,q'\in P$ be their corresponding points. Note that since the length of the cycle $C$ is odd, there is an even length path between $v$ and $v'$ in $\Gamma_{k,\sqrt{2}-\varepsilon}(P)$ that has length at most $g-1$. 
    Let $q=q_0, q_1, \dots, q_m =q' \in P$ be every other point along this path, with $m\leq \frac{g}{2}$. For each $i,\, 0\leq i\leq m-1$, the vertices corresponding to $q_i$ and $q_{i+1}$ share a neighbor, and thus \[\norm{q_i-q_{i+1}}\leq  \frac{1}{g}.\] By the triangle inequality, we have \[\norm{q-q'}_2\leq  m \cdot \frac{1}{g}\leq \frac{1}{2}.\] For small enough $\varepsilon>0$, we have that $1/2 < \sqrt{2}-\varepsilon$, contradicting the fact that $q$ and $q'$ correspond to adjacent vertices.
\end{proof}

\begin{remark}
    We note that Theorems \ref{thm: in a sphere barrier} and \ref{thm: odd girth barrier} hold only with respect to \tdiam, and not \kdiam for $k\geq 4$. This is in contrast to our inapproximability result from Theorem \ref{thm: L2 1.304}, which holds for all $k\geq 3$.
\end{remark}

\subsection*{Acknowledgements}
We thank Vincent Cohen-Addad and Euiwoong Lee for pointing us to \cite{badoiu2002approximate} and it's implications to Euclidean \kdiam. 

This work was carried out while Kyrylo, Ashwin, and Stepan were participating in the  DIMACS REU 2023.
Kyrylo was supported by the CoSP, an European Union's Horizon 2020 programme, grant agreement No. 823748.
Karthik was supported   by the National Science Foundation under Grant CCF-2313372 and a grant from the Simons Foundation, Grant Number 825876, Awardee Thu D. Nguyen.
Ashwin and Stepan were supported by the NSF grant CNS-2150186.

\bibliographystyle{alpha}
\bibliography{references}

\appendix

\section{Testing $r$-embeddability with a Linear Program} \label{sec: appendix A}

It turns out finding the largest ratio $r$ for which a given graph $H$ is $r$-embeddable into $(\{0,1\}^m, \ell_1)$ can be framed as the solution to a linear program. To see this, let $H=(V,E)$ be a graph on $n$ vertices labeled $v_1, v_2, \ldots, v_n$, and let $\varphi$ be an $r$-embedding of $H$ into the space $\{0,1\}^m$, with short distance $\beta$. For each $w\in\{0,1\}^n$, associate a nonnegative integer $x_w$ indicating the number of indices $i\in[m]$ such that $(\varphi(v_j))_i = w_j$ holds for all $j\in [n]$. Then, distances in our embedded space can be written as the following linear combination: $$\norm{\varphi(v_a)-\varphi(v_b)}_0 = \sum_{\substack{w\in\{0,1\}^n \\ w_a\neq w_b}} x_w.$$ Then, for a given short distance parameter $\beta$, finding the largest possible embeddability ratio $r$ is equivalent to solving the following optimization problem over nonnegative integer variables $\{x_w : w\in\{0,1\}^n\}$ and a rational variable $r$:
\begin{equation*}
\begin{array}{ll@{}ll}
\text{maximize} & r &\\
\text{subject to} &
\begin{cases}\displaystyle\sum_{\substack{w\in\{0,1\}^n \\ w_a\neq w_b}} x_w \geq r\beta & \text{if } (v_a,v_b)\in E\\
\displaystyle\sum_{\substack{w\in\{0,1\}^n \\ w_a\neq w_b}} x_w \leq \beta & \text{if } (v_a,v_b)\notin E
\end{cases}
\end{array}
\end{equation*}
We can remove dependence on $s$ by simply setting $\beta=1$ and allowing each $x_w$ to take any nonnegative rational value, obtaining the following linear program: \begin{equation*}
\begin{array}{ll@{}ll}
\text{maximize} & r &\\
\text{subject to} &
\begin{cases}\displaystyle\sum_{\substack{w\in\{0,1\}^n \\ w_a\neq w_b}} x_w \geq r & \text{if } (v_a,v_b)\in E\\
\displaystyle\sum_{\substack{w\in\{0,1\}^n \\ w_a\neq w_b}} x_w \leq 1 & \text{if } (v_a,v_b)\notin E
\end{cases}
\end{array}
\end{equation*}

Moreover, since each of our constraints have integer coefficients, the solution vector to this linear program will be rational. Thus, we can scale each $x_w$ to be an integer, thus obtaining an embedding into $\{0,1\}^m$ that achieves the optimal ratio $r$, where $m 
 =  \displaystyle\sum_{\substack{w\in\{0,1\}^n}} x_w \in \N$.

In particular, for a given graph $H$ and parameter $r>1$, this gives us a computational way to check whether or not $H$ is $r$-embeddable in the Hamming metric. That said, the number of constraints in the linear program is $\Omega(2^n)$, meaning this method is only tractable when $n =|H|$ is relatively small.

\section{Proof of Theorem \ref{thm: in a sphere barrier}} \label{sec: appendix B}

\begin{proof}

Let $\varepsilon'=45\varepsilon^{1/4}$ and fix a pointset $P$ not contained in any bounding sphere with diameter $ \sqrt{2}+\varepsilon'$. By scaling, we can assume that $\Delta=\frac{\sqrt{2}}{\sqrt{2}+\varepsilon}$ without loss of generality. Let $A\cup B\cup C$ be an optimal $3$-clustering of $P$. 

    We run the algorithm from Theorem \ref{alg:const_sqrt_approx} on the pointset $P$, with approximation factor $\sqrt{2}+\varepsilon$, hence obtaining a $3$-clustering of $P$ with diameter at most $\Delta\cdot(\sqrt{2}+\varepsilon) = \sqrt{2}$. Theorem 2.7 of \cite{badoiu2002approximate} gives that all three clusters can be written as the intersection of balls in $\R^m$ with the pointset $P$, with points contained in multiple balls assigned arbitrarily. Moreover, we desire a stronger property: that the optimal clusters $A,B,C$ are each contained in one of these balls. Slightly extending their algorithm by returning all $3$-clusterings enumerated by the simulation of the ``guessing oracle'', we can ensure that one of the returned $3$-clusterings achieves a $(\sqrt{2}+\varepsilon)$-approximation factor as well as the desired property. It suffices to run the following procedure over all of these $3$-clusterings. On at least one iteration, the desired approximation will hold.

    Let $N_A,N_B,N_C$ be balls of diameter $\sqrt{2}$ that contain $A,B,C$, respectively. Let $a,b,c$ be their respective centers. 
    We now split into two cases which, by a relabeling argument, are exhaustive. \begin{itemize}
        \item \textbf{Case I:} $N_A\cap N_B,\,N_A\cap N_C,\,N_B\cap N_C$ each have diameter $\leq \sqrt{2}-\varepsilon$. In this case, we have the following algorithm:
        \begin{enumerate}
            \item For each point $p \in P$, initialize a set $S_p$ as follows: 
            \[S_p = \{X \in \{A,B,C\}: p \in N_X\}\]
            as the set of balls that contain $p$. 
            \item For every pair of points $p, q \in P$ such that $\norm{p-q}_2 >\Delta$, if $S_p$ contains only one element, and $S_q$ contains that element, remove it from $S_q$. Repeat this until no such removals are possible.
            \item Let 
            \begin{align*}
                A' &:= \{p \in P : A \in S_p, B \not \in S_p \} \\
                B' &:= \{p \in P : B \in S_p, C \not \in S_p \} \\
                C' &:= P \setminus (A' \cup B').
            \end{align*}
            Output the $3$-clustering $\{A',B',C'\}$.
        \end{enumerate}
        We can show that the following property remains true throughout the iterations of step 2 of the algorithm: if $p$ is in some optimal cluster $X \in \{A,B,C\}$, then $X \in S_p$.  It is true in the initialization of $S_p$ in step 1 of the algorithm. We only remove $X$ from $S_p$ if there exists some $q$ such that $\norm{p -q}_2 > \Delta$, and $q \in X$. So, $p$ cannot be in $X$, because the optimal clustering $\{A,B,C\}$ has diameter $\Delta$. Thus, if $p\in X$, then $X \in S_p$. \\

        Let $p,q \in A'$. Note that $A \in S_p$ and $A\in S_q$ by definition. If $S_p = \{A\}$ or $S_q = \{A\}$, then $\norm{p-q}_2 \le \Delta$, or $A$ would have been removed. In all other possible cases, $C \in S_p$ and $C \in S_q$. So, $p, q \in N_A \cap N_C$. Thus, $\norm{p-q}_2 \le \sqrt{2}-\varepsilon$.\\

        Similarly, if $p,q \in B'$, then either $\norm{p-q}_2 \le \Delta$ or $p, q \in N_A \cap N_B$. If $p,q \in C'$, then either $\norm{p-q}_2 \le \Delta \le \sqrt{2}- \varepsilon$ or $p, q \in N_C \cap N_B$. In all cases, $\norm{p-q}_2 \le \sqrt{2}-\varepsilon$, so our $3$-clustering has diameter at most $\sqrt{2}-\varepsilon$.

        \item \textbf{Case II:} $N_A\cap N_B$ has diameter $> \sqrt{2}-\varepsilon$.
        We claim that, in this case, the diameter of $C\cap (A\cup B)$ is at most $\sqrt{2}-\varepsilon$. Given that this is true consider the following algorithm:
        \begin{enumerate}
            \item For each point $p \in P$, initialize
            \[S_p = \{X \in \{A,B,C\}: p \in N_X\}\]
            as the set of balls that contain $p$, as in Case I. 
            \item For every pair of points $p, q \in P$ such that $\norm{p-q}_2 > \Delta$, if $S_p$ contains only one element, and $S_q$ contains that element, remove it from $S_q$. Repeat this until no such removals are possible, as in Case I.
            \item Let 
            \begin{align*}
                C' &:= \{p \in P : C \in S_p \}.
            \end{align*}
            Find an optimal $2$-clustering of $P \setminus C'$, and call the output clusters $A'$ and $B'$. Output the clustering $A',B',C'$.
        \end{enumerate}
        
        Note that for any point $p \in P \setminus C',$ we have that $C \not \in S_p$. Thus, $P \setminus C' \subseteq A \cup B$, so there exists a 2-clustering of $P \setminus C'$ of diameter at most $\Delta$. So, $A'$ and $B'$ have diameter at most $\Delta$.

        Now, for any $p,q \in C'$, if $S_p = \{C\}$ or $S_q = \{C\}$, then $\norm{p-q}_2 \le \Delta$, or $C$ would have been removed. So, if $\norm{p-q}_2 > \Delta$, then both $S_p$ and $S_q$ contain either $A$ or $B$. In this case, $p,q \in C \cap (A \cup B)$, so  $\norm{p-q}_2 \le \sqrt{2}-\varepsilon$. Hence, the clustering $\{A',B',C'\}$ has diameter at most $\sqrt{2}-\varepsilon$. \\
        
        It remains to verify our claim that $C\cap(A\cup B)$ has diameter at most $\sqrt{2}-\varepsilon$. First, observe that $N_A\cup N_B$ are contained within a ball $N_D$ whose center $d$ is the midpoint between $a$ and $b$ and whose radius is $\frac{\alpha+\sqrt{2}}{2}$, where $\alpha := \norm{a-b}_2$. Since $N_A\cap N_B$ has diameter at least $\sqrt{2}-\varepsilon$, by the Pythagorean Theorem we have: $$\alpha \leq 2\sqrt{\left(\frac{\sqrt{2}}{2}\right)^2-\left(\frac{\sqrt{2}-\varepsilon}{2}\right)^2} = \sqrt{2\varepsilon\sqrt{2}-\varepsilon^2} .$$ Now, we apply the fact that $P$ is not contained in a bounding sphere of diameter $\Delta\cdot(\sqrt{2}+\varepsilon')$. If $\varepsilon'\geq 2\varepsilon$, we observe that $$\Delta\cdot(\sqrt{2}+\varepsilon') = \sqrt{2}\cdot\frac{\sqrt{2}+\varepsilon'}{\sqrt{2}+\varepsilon} = \sqrt{2} + \sqrt{2}\cdot\frac{\varepsilon
        '-\varepsilon}{\sqrt{2}+\varepsilon}\geq \sqrt{2} + \sqrt{2}\cdot\frac{\varepsilon'/2}{\sqrt{2}} = \sqrt{2}+\frac{\varepsilon'}{2}.$$
        Hence, it is also true that $P$ is not contained in a bounding sphere of diameter $\sqrt{2}+\frac{\varepsilon'}{2}$. The same is true for $N_C\cup N_D$, since $P\subset N_C\cup N_D$, meaning: \begin{align*}
            \sqrt{2}+\frac{\varepsilon'}{2} &\leq 
            \text{diameter}(N_C\cup N_D) \\&\leq
            \norm{c-d}_2 + \text{radius}(N_C) + \text{radius}(N_D) \\&=
            \norm{c-d}_2 + \frac{\sqrt{2}}{2} + \frac{\alpha + \sqrt{2}}{2} \\&=
            \norm{c-d}_2 + \frac{\alpha}{2} + \sqrt{2}.
        \end{align*}
        Hence, $\norm{c-d}_2 \geq \varepsilon'-\frac{\alpha}{2}$. Finally, let $h$ denote the diameter of $N_C\cap N_D$;  since $A\cup B\subset N_D$ and $C\subseteq N_C$, it suffices to show that $h\leq \sqrt{2}-\varepsilon$. By the Pythagorean Theorem, we have: \begin{align*}
            \frac{h}{2} &\leq
            \sqrt{\text{radius}(N_D)^2 - \left(\frac{\norm{c-d}_2}{2}\right)^2} \\&\leq
            \frac{1}{2}\cdot \sqrt{(\alpha+\sqrt{2})^2 - \left(\frac{\varepsilon'-\alpha}{2}\right)^2} \\&=
            \frac{1}{2}\cdot\sqrt{\frac{3\alpha^2}{4} + 2\alpha\sqrt{2} + \frac{\alpha\varepsilon'}{2} + 2 - \frac{{\varepsilon'}^2}{4}}
        \end{align*} Hence, it suffices to show that: $$(\sqrt{2}-\varepsilon)^2 = 2 - 2\varepsilon\sqrt{2}+\varepsilon^2 \geq  \frac{3\alpha^2}{4} + \left(2\sqrt{2} + \frac{\varepsilon'}{2}\right)\cdot \alpha + 2 - \frac{{\varepsilon'}^2}{4}$$ From here, we provide a sequence of stronger inequalities, eventually showing that our setting of $\varepsilon' = 60\varepsilon^{1/4}$ is satisfies the above inequality. First, recall that $\alpha\leq \sqrt{2\varepsilon\sqrt{2}-\varepsilon^2}$. Thus, it is sufficient to have $$2 - 2\varepsilon\sqrt{2}+\varepsilon^2\geq \frac{6\varepsilon\sqrt{2}-3\varepsilon^2}{4} + \left(2\sqrt{2}+\frac{\varepsilon'}{2}\right)\sqrt{2\varepsilon\sqrt{2}-\varepsilon^2}  + 2 - \frac{{\varepsilon'}^2}{4}$$ Subtracting $(2+\frac{6\varepsilon\sqrt{2}-3\varepsilon^2}{4})$ from both sides, it is equivalent that $$\frac{7}{4}(\varepsilon^2-2\varepsilon\sqrt{2})\geq \left(2\sqrt{2}+\frac{\varepsilon
        '}{4}\right)\sqrt{2\varepsilon\sqrt{2}-\varepsilon^2}-\frac{{\varepsilon'}^2}{4}$$
        Applying the trivial bound $\varepsilon^2 \geq 0$, it is sufficient that $$-\frac{7\varepsilon}{\sqrt{2}} \geq \left(2\sqrt{2}+\frac{\varepsilon'}{2}\right)\sqrt{2\varepsilon\sqrt{2}}-\frac{{\varepsilon'}^2}{4}$$ For $\varepsilon'\leq 1$, we have $\left(2\sqrt{2}+\frac{\varepsilon'}{2}\right)\sqrt{2\sqrt{2}} < 6$, so it would be sufficient if $$-\frac{7\varepsilon}{\sqrt{2}} \geq 6\sqrt{\varepsilon}-\frac{{\varepsilon'}^2}{4}.$$ Rearranging, we have: $${\varepsilon'}^2 \geq 24\sqrt{\varepsilon} + \frac{28\varepsilon}{\sqrt{2}}.$$ Finally, for $\varepsilon\leq 1$ it is true that $\varepsilon\leq \sqrt{\varepsilon}$, and since $24+\frac{28}{\sqrt{2}} < 45$, it is good enough to have $${\varepsilon'}^2\geq 45\sqrt{\varepsilon}$$ This is satisfied when $\varepsilon' = 45\varepsilon^{1/4}.$ Note that our argument relied on $\varepsilon\leq 1$, $\varepsilon'\leq 1$ and $\varepsilon'\geq 2\varepsilon$ all holding. One can verify that for  $\varepsilon' = 45\varepsilon^{1/4}$, all of these bounds are true when $\varepsilon$ is sufficiently small.
    \end{itemize}

    \noindent In both cases, then, we produced an algorithm which outputs a $3$-clustering with diameter at most $\sqrt{2}-\varepsilon$. Since the optimal $3$-clustering has diameter $\Delta = \frac{\sqrt{2}}{\sqrt{2}+\varepsilon}$, we achieve an approximation ratio of: $$\frac{\sqrt{2}-\varepsilon}{\Delta} = \frac{(\sqrt{2}-\varepsilon)(\sqrt{2}+\varepsilon)}{\sqrt{2}} = \frac{2-\varepsilon^2}{\sqrt{2}} = \sqrt{2} - \frac{\varepsilon^2}{\sqrt{2}}.$$

    We have demonstrated a $(\sqrt{2}-\frac{\varepsilon^2}{\sqrt{2}})$-approximation algorithm for pointsets not contained in a bounding sphere of diameter $\Delta\cdot\left(\sqrt{2}+45\varepsilon^{1/4}\right)$. By a change of variables $\varepsilon\mapsto 2^{1/4}\sqrt{\varepsilon}$, this gives a $(\sqrt{2}-\varepsilon)$-approximation for pointsets not contained in a bounding sphere with diameter
    \[\Delta\cdot\left(\sqrt{2} + 45\left(2^{1/4}\sqrt{\varepsilon}\right)^{1/4}\right) = \Delta\cdot\left(\sqrt{2} + 45\cdot 2^{1/16} \varepsilon^{1/8}\right).\] Since $45\cdot 2^{1/16}<50$, this algorithm also holds for pointsets not contained in a bounding sphere with diameter $\Delta\cdot\left(\sqrt{2} +50\varepsilon^{1/8}\right)$, which is what we wanted to show.

\end{proof}

\end{document}